\newcommand{\E}{\mathbb{E}}
\newcommand{\R}{\mathbb{R}}
\renewcommand{\P}{\mathbb{P}}
\DeclareMathOperator*{\argmin}{argmin} 
\DeclareMathOperator*{\argmax}{argmax} 
\DeclareMathOperator*{\Lip}{Lip} 
\DeclareMathOperator*{\bdd}{bdd} 
\DeclareMathOperator*{\nn}{nn} 
\DeclareMathOperator*{\id}{1}
\DeclareMathOperator*{\var}{var}
\DeclareMathOperator*{\unid}{2}
\newtheorem{theorem}{Theorem}
\newtheorem{lemma}{Lemma}
\theoremstyle{definition}
\newtheorem{assumption}{Assumption}
\title{Off-policy evaluation beyond overlap: partial identification through smoothness}
\author{%
	Samir Khan \\
  Stanford University\\
  \texttt{samirk@stanford.edu} \\
  \and
   Martin Saveski \\
   Stanford University \\
   \texttt{msaveski@stanford.edu} \\
  \and
   Johan Ugander \\
   Stanford University \\
   \texttt{jugander@stanford.edu} \\
}
\begin{document}
\maketitle

\begin{abstract}

	Off-policy evaluation (OPE) is the problem of estimating the value of a target policy using historical data collected under a different logging policy. OPE methods typically assume overlap between the target and logging policy, enabling solutions based on importance weighting and/or imputation. In this work, we approach OPE without assuming either overlap or a well-specified model by considering a strategy based on partial identification under non-parametric assumptions on the conditional mean function, focusing especially on Lipschitz smoothness. Under such smoothness assumptions, we formulate a pair of linear programs whose optimal values upper and lower bound the contributions of the no-overlap region to the off-policy value. We show that these linear programs have a concise closed form solution that can be computed efficiently and that their solutions converge, under the Lipschitz assumption, to the sharp partial identification bounds on the off-policy value. Furthermore, we show that the rate of convergence is minimax optimal, up to log factors. We deploy our methods on two semi-synthetic examples, and obtain informative and valid bounds that are tighter than those possible without smoothness assumptions.

\end{abstract}


\section{Introduction}

Off-policy evaluation (OPE) is the problem of estimating the value of an evaluation/target policy using data from a behavior/logging policy, and arises naturally in many settings including medicine, program evaluation, and recommender system design \citep{li2011, bottou2013, swaminathan2017, liao2021, chin2022}. Generally speaking, methods for off-policy evaluation typically take one of two forms: reweighting or imputation. Reweighting methods, as the name suggests, compute the ratio of action probabilities under the evaluation policy and the behavior policy, and then use these weights to obtain unbiased estimates of the value of the evaluation policy. Imputation methods, on the other hand, model the outcome observed when taking an action as a function of covariates, use this model to estimate outcomes under the evaluation policy, and then average the estimated outcomes to obtain an off-policy value estimate. Finally, doubly-robust methods combine these two approaches to obtain better theoretical guarantees \citep{dudik2011}.

However, all of these approaches require overlap between the behavior policy and the evaluation policy. That is, there must not be actions where the behavior policy assigns a positive probability but the evaluation policy assigns zero probability. If there are such actions, then the weights used by reweighting methods will be infinite, and the models used by imputation methods will only be valid under the strong and unverifiable assumption that they can accurately extrapolate from the support of the behavior policy (where they were trained) to the support of the evaluation policy (where they must predict). This extrapolation can be justified for well-specified parametric models, but will lead to biased estimates in general \citep{sachdeva2020}.

The overlap requirement fundamentally limits the utility of off-policy evaluation, since if we wish to explore a large space of potential evaluation policies, we must have a behavior policy that takes every action with positive probability. Universal overlap can be achieved by having, for example, a uniform behavior policy, but in high-stakes settings where some actions are undesirable, running a uniform behavior policy can be ethically or financially prohibitive. In these settings, we would like to deploy a behavior policy that assigns zero probability to some actions and still be able to evaluate policies that may take those actions. 

In this work, we address these challenges by developing methods for off-policy evaluation that handle overlap violations while making only weak, non-parametric assumptions. Our approach is based on the observation that overlap violations are essentially an identifiability problem, and so can be handled using partial identification methods that report bounds on the off-policy value under assumptions on the underlying data generating distribution \citep{manski1990, manski2010}. We focus mainly on smoothness assumptions that constrain the conditional outcome to be $L$-Lipschitz, and also briefly discuss monotonicity assumptions on the conditional outcome. Approaches based on stronger smoothness assumptions, such as smoothness of higher derivatives of the conditional outcome or of H\"older continuity, are interesting directions for future work. All of these smoothness assumptions are generalizations of the classical bounded response assumption used to obtain partial identification in \citet{manski1990}. Under the Lipschitz assumptions, we provide bounds on the off-policy value rather than point estimates, thus accurately reflecting our uncertainty about the model's ability to extrapolate into the no-overlap region. 

\begin{figure}[t]
	\centering
	\includegraphics{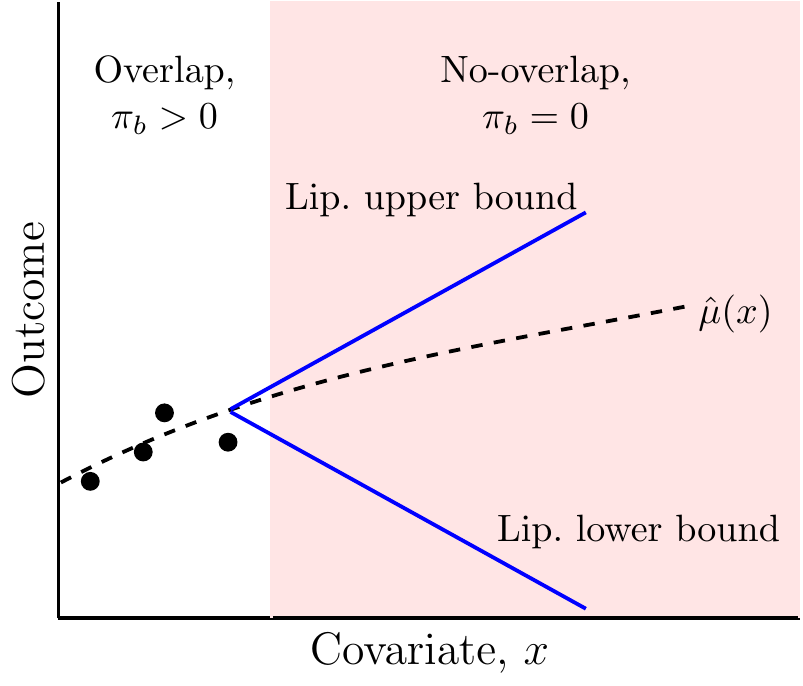}
	\caption{A visualization of our approach on a toy problem with three data points in a one-dimensional covariate space. The data are collected using a behavior policy $\pi_b$. The function $\hat{\mu}(x)$ (dashed black) is estimated based on the points observed in the overlap region where $\pi_b>0$ and treatment probability is positive, so we are unsure whether it is accurate in the no-overlap region where $\pi_b=0$ and treatment probability is zero (red). Rather than directly using the predictions of $\hat{\mu}$ in the no-overlap region, we assume the true outcome is $L$-Lipschitz and use this assumption to provide upper and lower bounds.}
\label{fig:demo}
\end{figure}

The spirit of this approach is illustrated in Figure~\ref{fig:demo}, which shows a toy problem with a single real-valued covariate. For small values of the covariate, we have overlap and so the behavior policy treats units with small covariate values with positive probabilities. For those treated units, we observe their outcomes (shown as black dots). On the other hand, for large values of the covariate, we have no-overlap (indicated by red), and so units in this region are never treated, and we do not observe any responses. We can fit a model $\hat{\mu}(x)$ (shown as a dashed black line) to the observed responses, but because this model has not been trained on observations in the no-overlap region, there is no way to guarantee that its predictions there are accurate. Thus, rather than directly using the model predictions, we assume that the true conditional outcome is $L$-Lipschitz. Under this assumption, if the model $\hat{\mu}$ is consistent in the overlap region, we can give bounds (shown in blue) on the outcome in the no-overlap region. These bounds on the outcome lead to bounds on the off-policy value and constitute our partial identification region. We emphasize that smoothness assumption here is with respect to a particular covariate space and metric on that covariate space; as such, the covariate space and metric should be chosen based on domain knowledge about which covariates and metric the outcome is plausibly smooth with respect to.

To reiterate, the key idea is that we are replacing an assumption on the correctness of $\hat{\mu}$ with an assumption on the smoothness of the true outcome function, and thus replacing estimates of the conditional outcome function with bounds on the conditional outcome function. An appealing feature of this approach is that the choice of Lipschitz parameter $L$ directly modulates the strength of the smoothness assumptions. Thus, varying $L$ leads to a sensitivity analysis that can highlight trade-offs between the strength of our assumptions and the strength of the conclusions we can draw.

\subsection{Related work}

The overlap assumption is standard in off-policy evaluation literature \citep{bottou2013, swaminathan2015, thomas2016, wang2017} and we focus our discussion on the emerging literature on policy evaluation and learning in the no-overlap setting. 

One strand of this literature relies on access to a well-specified outcome model. For example, \citet{mou2023} assume that the reward function lies within a reproducing kernel Hilbert space and use this assumption to extrapolate into the no-overlap region, thus effectively assuming a well-specified outcome model. Additionally, their method requires the action probabilities to be positive, albeit potentially arbitrarily small, while our results allow for action probabilities that are exactly zero. On the policy learning side, \citet{sachdeva2020} first show that a policy learned using inverse-propensity scoring is sub-optimal, and then consider several remedies based on restricting the set of policies that are optimized over or using a well-specified model. In contrast to these works, we require no such extrapolation and allow for evaluation of arbitrary policies.

Another strand of the no-overlap literature develops methods for policy learning by optimizing lower bounds on the off-policy value over some class of outcome functions. This is the approach taken by \citet{higbee2022}, \citet{manskiusing}, and \citet{ben2021}, although in all cases the bounds they optimize are potentially quite loose and are thus inappropriate for providing information about the value of a particular policy; we demonstrate this looseness in simulations in Section~\ref{sec:experiments}. Of these works, \citet{higbee2022} and \citet{manskiusing} are slightly further from ours, in that they make assumptions on the expected outcome as a function of action rather than as a function of covariates, and \citet{ben2021} is slightly closer to ours, in that they make a similar Lipschitz assumption on the expected outcome as a function of covariates. In all cases, our estimators are different and the theoretical optimality results we obtain for our estimators go beyond the results found in these works and more completely develop the partial identification framework for off-policy evaluation under smoothness; see Appendix~\ref{sec:ben_compare} for further details.

In the reinforcement learning literature, \citet{jiang2020} propose a ``minimax value interval'' that is valid even under no-overlap. However, their method relies on a well-specified outcome model, requires solving a challenging min-max optimization problem, and is more relevant when both the outcome model and behavior policies are unknown. In our setting, the behavior policy is known, and so our methods, which also come with optimality guarantees, are preferable.

Finally, \citet{saveski2023} deploy methods based on partial identification under smoothness assumptions in an off-policy evaluation problem arising in the context of matching reviewers to papers at academic conferences. However, they combine the steps of fitting a model $\hat{\mu}$ and estimating bounds on the off-policy value into a single linear program, leading to sub-optimal bounds. Their work highlights the importance of the no-overlap setting, and thus of the potential for our methods and results to provide stronger conclusions on existing data sets.


\section{Model and notation}
\label{sec:model}

We consider the problem of off-policy evaluation of stochastic policies that select an action from an action space based on covariates $X_i$. The covariates $X_i$ take values in a metric space $\mathcal{X}$ with metric $d(\cdot, \cdot)$. The actions are random variables $A_i$ that takes values in a set $\mathcal{A}$. Each action $a\in \mathcal{A}$ has a corresponding potential outcome $Y_i(a)$, and we assume that tuples $(X_i, \{Y_i(a)\}_{a\in \mathcal{A}})$ are i.i.d.~from a distribution $P_0$ on $\mathcal{X}\times \R^{|\mathcal{A}|}$ for some covariate space $\mathcal{X}$. The actions $A_i$ are generated such that $\P(A_i=a\mid X_i=x)=\pi_b(x, a)$ for a behavior policy $\pi_b:\mathcal{X}\times \mathcal{A}\to [0,1]$ that satisfies the constraint $\sum_{a\in \mathcal{A}} \pi_b(x, a)=1$, for all $x\in \mathcal{X}$. These constraints ensure that $\pi_b$ maps covariates $X_i$ to a probability distribution over $\mathcal{A}$. 

In this set-up, we observe $(X_1,A_1, Y_1(A_1)),\ldots , (X_n, A_n, Y_n(A_n))$ under the  behavior policy $\pi_b$ and would like to estimate the value of a different policy $\pi_e$. That is, we would like to estimate the functional 
\begin{equation}
	\psi(P_0)=\E_{(X_i, Y_i)\sim P_0, \hspace{0.8mm} A_i\mid X_i\sim \pi_e}[Y_i(A_i)],
\end{equation}
where $A_i|X_i$ is drawn according to $\pi_e$. It is convenient to decompose this functional across the actions and write it as 
\begin{equation}
	\psi(P_0)=\sum_{a\in \mathcal{A}} \E_{(X_i, Y_i)\sim P_0}[Y_i(a)\pi_e(X_i, a)].
	\label{eq:ope_val}
\end{equation}
Critically, in this work, we allow for the possibility that there exists $x\in \mathcal{X}$ and $a\in \mathcal{A}$ such that $\pi_b(x,a)=0$. We refer to $\{x: \pi_b(x,a)=0\}$ as the {\it no-overlap region} and to $\{x:\pi_b(x,a)>0\}$ as the {\it overlap region}. 

The model described thus far corresponds to a general multi-action off-policy evaluation problem. However, by virtue of the decomposition in \eqref{eq:ope_val}, we can naturally reduce any multi-action OPE problem to a binary action OPE problem. For any action $a\in\mathcal{A}$, we can define the binary action $\tilde{A}_i=\mathbf{1}\{A_i=a\}$ and binary evaluation policy $\tilde{\pi}_e(X_i)=\pi_e(X_i, a)$. Then, if we can estimate the functional 
\begin{equation}
	\tilde{\psi}(P_0)=\E_{P_0}[Y_i(a)\tilde{\pi}_e(X_i)],
	\label{eq:binary_ope_val}
\end{equation}
from data $(X_1, \tilde{A}_1, Y_1(a)\tilde{A}_1),\cdots, (X_n, \tilde{A}_n, Y_n(a)\tilde{A}_n)$, we can combine these estimates across all $a\in\mathcal{A}$ to estimate $\psi(P_0)$.
Further, in a binary problem, we write $\mu_P(x)=\E_P[Y_i(a)\mid X_i=x]$ for the conditional mean function under a distribution $P$. 



\section{Nonparametric partial identification bounds}
\label{sec:estimator}

In this section, we describe our framework for OPE without overlap and then provide several specific instantiations. We work in the setting where $A_i$ is binary; by the reduction of Section~\ref{sec:model}, our results extend naturally to the multi-action setting. For an OPE problem with binary actions we show how to partially identify the off-policy value $\psi(P_0)$ using the assumption that $P_0\in\mathcal{P}$ for a family of distributions $\mathcal{P}$. This family $\mathcal{P}$ encodes the nature of our assumptions on $P_0$ and may take several forms, but one crucial feature that $\mathcal{P}$ has to satisfy is that it must only contain distributions that are consistent with the true distribution $P_0$. 


To elaborate, for any $P\in \mathcal{P}$, the distribution of the observed data $(X_i, A_i, A_iY_i)$ must be the same as it is under the true data generating distribution $P_0$. If this is not the case, then the bounds we obtain under the assumption that $P_0\in \mathcal{P}$ may correspond to a distribution that we know could not have generated the data. More precisely, we say that $P$ is consistent with $P_0$ if (i)~the joint distribution of $(X_i, Y_i)$ under $P$ in the overlap region is the same as the joint distribution of $(X_i, Y_i)$ under $P_0$, and (ii)~the marginal distribution of $X_i$ in the no-overlap region under $P_0$ is the same as the marginal distribution of $X_i$ in the no-overlap region under $P$. 

We describe our approach for a generic family $\mathcal{P}$ and then turn to several concrete choices of $\mathcal{P}$. The first step is to decompose the functional $\psi$ as 
\begin{equation}
	\psi(P)=\underbrace{\E_P[Y_i\pi_e(X_i)\mathbf{1}\{\pi_b(X_i)>0\}]}_{\psi_{\id}} +\underbrace{\E_P[Y_i\pi_e(X_i)\mathbf{1}\{\pi_b(X_i)=0\}]}_{\psi_{\unid}}.
	\label{eq:no_overlap_split}
\end{equation}
so that $\psi_{\id}$ is the contribution of the overlap region and $\psi_{\unid}$ is the contribution of the no-overlap region. This separates out the part of $\psi$ that is identifiable, $\psi_{\id}$, from the part that is not, $\psi_{\unid}$.

The first term, $\psi_{\id}$, is identifiable, so we can estimate it, e.g., using an inverse-probability weighted (IPW) estimator
\begin{equation}
	\hat{\psi}_{\id}(P_0)=\frac{1}{n}\sum_{i=1}^n Y_iA_i\frac{\pi_e(X_i)}{\pi_b(X_i)}\mathbf{1}\{\pi_b(X_i)>0\}.
	\label{eq:psi_1_hat}
\end{equation}
In fact, a self-normalized or doubly-robust estimator can also be used to estimate $\psi_{\id}$ without any modifications to our results \citep{dudik2011, swaminathan2015}; we present the IPW case here for simplicity.

The second term, $\psi_{\unid}$, however is not identified. Our approach is to bound its contribution to \eqref{eq:no_overlap_split} using the assumption that $P_0\in \mathcal{P}$ for some family $\mathcal{P}$. Under this assumption, the tightest possible bounds we could obtain are
\begin{equation}
	\psi_2^-:= \inf_{P\in \mathcal{P}} \psi_{\unid}(P)\quad\text{and}\quad \psi_2^+:=\sup_{P\in\mathcal{P}}\psi_{\unid}(P),
	\label{eq:inf_sup}
\end{equation}
and so we must construct estimators $\hat{\psi}_{\unid}^-$ and $\hat{\psi}_{\unid}^+$ of $\psi_2^-$ and $\psi_2^+$ respectively.


Once we have such estimators, we can set
\begin{equation}
	\hat{\psi}^-=\hat{\psi}_{\id}+\hat{\psi}_{\unid}^-,\quad \hat{\psi}^+=\hat{\psi}_{\id}+\hat{\psi}_{\unid}^+,
	\label{eq:pid_interval}
\end{equation}
and use $[\hat{\psi}^-, \hat{\psi}^+]$ as an interval estimate of $\psi(P_0)$. The following result, whose proof appears in Appendix~\ref{sec:proofs}, guarantees the validity of this interval under conditions on $\hat{\psi}_{\id}, \hat{\psi}_{\unid}^+,$ and $\hat{\psi}_{\unid}^-$.
\begin{theorem}
	\label{thm:interval_consistency}
	Suppose that $\hat{\psi}_{\id}$ is a consistent estimator of $\psi_{\id}(P_0)$, and that $\hat{\psi}_{\unid}^-$ and $\hat{\psi}_{\unid}^+$ are consistent estimators of $\psi_2^-$ and $\psi_2^+$, respectively. Then, for any $\epsilon>0$,
	\begin{equation}
		\lim_{n\to \infty} \P(\hat{\psi}^--\epsilon\leq \psi(P_0)\leq \hat{\psi}^++\epsilon )=1.
		\label{eq:interval_consistency}
	\end{equation}
\end{theorem}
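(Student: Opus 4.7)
The plan is to observe that, under the (implicit) assumption $P_0\in\mathcal{P}$ that underlies the partial identification framework, the unknown quantity $\psi(P_0)$ is deterministically bracketed by $\psi_{\id}(P_0)+\psi_2^-$ and $\psi_{\id}(P_0)+\psi_2^+$, and then to transfer this deterministic bracketing to the estimated interval $[\hat\psi^-,\hat\psi^+]$ using consistency of the three component estimators.

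For the deterministic bracketing, I would apply the decomposition \eqref{eq:no_overlap_split} at the true distribution $P_0$ and combine it with the definitions of $\psi_2^\pm$ as the infimum and supremum of $\psi_{\unid}$ over $\mathcal{P}$ in \eqref{eq:inf_sup}. Since $P_0\in\mathcal{P}$, this immediately yields $\psi_2^-\leq\psi_{\unid}(P_0)\leq\psi_2^+$ and hence
\begin{equation*}
\psi_{\id}(P_0)+\psi_2^-\;\leq\;\psi(P_0)\;\leq\;\psi_{\id}(P_0)+\psi_2^+.
\end{equation*}

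Next I would invoke the consistency hypotheses on $\hat{\psi}_{\id},\hat{\psi}_{\unid}^-,\hat{\psi}_{\unid}^+$ together with the continuous mapping theorem (or a direct union bound) to conclude that $\hat\psi^\pm=\hat\psi_{\id}+\hat\psi_{\unid}^\pm$ converge in probability to $\psi_{\id}(P_0)+\psi_2^\pm$, respectively. Then for any fixed $\epsilon>0$, the event $\{\hat\psi^-\leq \psi_{\id}(P_0)+\psi_2^-+\epsilon\}\cap\{\hat\psi^+\geq \psi_{\id}(P_0)+\psi_2^+-\epsilon\}$ has probability tending to $1$, and on this event the deterministic bracketing above forces $\hat\psi^--\epsilon\leq \psi(P_0)\leq \hat\psi^++\epsilon$, which is exactly \eqref{eq:interval_consistency}.

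There is no real obstacle to this argument; the theorem is essentially a bookkeeping result that glues together the partial identification construction with pointwise consistency of the three pieces. The only subtlety worth flagging is that the chain of inequalities above implicitly relies on $P_0\in\mathcal{P}$: if the family $\mathcal{P}$ failed to contain the true data-generating distribution, then $\psi_{\unid}(P_0)$ would not be guaranteed to lie in $[\psi_2^-,\psi_2^+]$ and the stated coverage conclusion could fail. This well-specification is the modeling assumption motivating the consistency requirement on $\mathcal{P}$ discussed in Section~\ref{sec:estimator}.
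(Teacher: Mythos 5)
Your proposal is correct and takes essentially the same route as the paper: the paper also reduces coverage to two one-sided failure events via a union bound, uses $P_0\in\mathcal{P}$ to replace $\psi_{\unid}(P_0)$ by $\psi_2^{\pm}$, and then invokes consistency of $\hat{\psi}_{\id}$ and $\hat{\psi}_{\unid}^{\pm}$ to send both failure probabilities to zero. Your explicit flagging of the implicit well-specification assumption $P_0\in\mathcal{P}$ is apt, since the paper uses it silently in the step bounding $\psi_{\unid}(P_0)$ by the supremum over $\mathcal{P}$.
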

The main challenge is to construct $\hat{\psi}_{\unid}^-, \hat{\psi}_{\unid}^+$ that satisfy the conditions of Theorem~\ref{thm:interval_consistency}. If those conditions are met, then the interval $[\hat{\psi}^-, \hat{\psi}^+]$ will be consistent for $\psi(P_0)$ in the sense of \eqref{eq:interval_consistency}, and so $[\hat{\psi}^-, \hat{\psi}^+]$ will provide valid bounds on $\psi(P_0)$ in large samples. Next, we consider several specific choices of $\mathcal{P}$ that arise from different assumptions on $P_0$, and show how these bounds can be computed. We focus throughout on the infimum, but all of our discussion holds \emph{mutatis mutandis} for the supremum. 
\subsection{Boundedness assumptions}
As a first example, we consider the following simple choice of $\mathcal{P}$, corresponding to the assumption that the response $Y_i$ must lie in the interval $[\ell, u]$:
\begin{equation}
	\mathcal{P}_{\ell, u}^{\bdd}=\left\{ P\text{ consistent w.~}P_0: \ell\leq Y_i\leq u \text{ a.s.}\right\}.
	\label{eq:bdd_fam}
\end{equation}
The assumption that $\ell \leq Y_i\leq u$ implies that $\ell \leq \mu_P(x) \leq u$ for all $x$ as well, and so a natural choice of $\psi_2^-$ is 
\begin{equation}
\hat{\psi}_{\unid}^-=\frac{\ell}{n}\sum_{i=1}^n \pi_e(X_i)\mathbf{1}\{\pi_b(X_i)=0\}
	\label{eq:manski_inf}
\end{equation}
By the law of large numbers, $\hat{\psi}_{\unid}^-\xrightarrow{\P} \E[\ell \pi_e(X_i)\mathbf{1}\{\pi_b(X_i)=0\}]$, and we can verify that this is also the value of $\psi_2^-$. Thus the condition of Theorem~\ref{thm:interval_consistency} holds, and the interval $[\hat{\psi}^-, \hat{\psi}^+]$ is consistent for $\psi(P_0)$. The interval $[\hat{\psi}^-, \hat{\psi}^+]$ is an analogue of the so-called Manski bounds \cite{manski1990}, and so our framework generalizes this well-established practice. As such, we can also obtain confidence intervals for the partial identification region of $\psi(P_0)$ using the methods of \citet{imbens2004} and \citet{stoye2009}.




\subsection{Smoothness assumptions}
\label{subsec:smooth}

Next, we move on to our main focus: Lipschitz assumptions on $\mu_P(x)$. Formally, this corresponds to the family
\begin{equation}
	\mathcal{P}_{L}^{\text{Lip}}=\left\{ P\text{ consistent w.~} P_0: |\mu_P(x_1)-\mu_P(x_2)|\leq Ld(x_1,x_2)\text{ for all }x_1, x_2\in\mathcal{X} \right\},
	\label{eq:smooth_fam}
\end{equation}
where $d$ is a metric on $\mathcal{X}$.
By restricting $\mu_P$ to be $L$-Lipschitz, we can draw conclusions about the behavior of $\mu_P$ in the no-overlap region based on our observations in the overlap region and thus estimate $\inf_{P\in \mathcal{P}_L^{\Lip}} {\psi}_{\unid}(P)$. 

Note that the assumption made in \eqref{eq:smooth_fam} is on the regression function $\mu_P$, rather than on the observed $Y_i$. If we were to make a smoothness assumption on the observed $Y_i$, it would be very difficult to satisfy such an assumption for large datasets, since we would eventually observe pairs of points $(X_i, Y_i)$ and $(X_j, Y_j)$ for which $X_i$ and $X_j$ are close, but $Y_i$ and $Y_j$ are far apart due to observation noise. We would thus want to introduce a level of slack in the Lipschitz assumption corresponding to the noise level of the problem, which amounts to assuming smoothness on a denoised version of the observed $Y_i$. This denoised version of the $Y_i$ is exactly $\mu_P$, and so we would recover the assumption that $P\in \mathcal{P}_L^{\Lip}$.

We propose to construct the estimator $\hat{\psi}_2^-$ in this setting by solving the following linear program:
\begin{mini}|s|
	{t_1,\cdots, t_n}{\frac{1}{n}\sum_{i=1}^n t_i\pi_e(X_i)\mathbf{1}\{\pi_b(X_i)=0\}}
					{}{}
					\addConstraint{|t_i-t_j|\leq Ld(X_i, X_j),\quad 1\leq i<j\leq n}
					\addConstraint{t_i-\hat{\mu}(X_i)=0,\quad 1\leq i\leq n\text{ s.t. }\pi_b(X_i)>0}.
					\label{eq:estim_lp}
				\end{mini}
				The problem \eqref{eq:estim_lp} is an approximation of the population problem $\inf_{P\in \mathcal{P}_{L}^{\Lip}} \psi_{\unid}(P)$ in three ways: \textit{(i)}~it averages over sample points in the objective rather than over $P_0$; \textit{(ii)} it only enforces the Lipschitz constraint between pairs of observed data points $X_i, X_j$ rather than between all pairs $x_1, x_2$ as in \eqref{eq:smooth_fam}; and \textit{(iii)} it sets points in the overlap region to have value $\hat{\mu}$ rather than $\mu$. We will see shortly that all of these approximations are asymptotically negligible.

				We now characterize the solution to \eqref{eq:estim_lp} and its properties under assumptions. The key to our results is the surprising fact that \eqref{eq:estim_lp} can be solved in closed-form whenever $d$ is a metric, even though this is not generally the case for linear programs. The reason we obtain a closed-form solution to \eqref{eq:estim_lp} is because its constraints satisfy what we refer to as a \emph{no-interaction property}, by which we mean that points in the no-overlap region do not place sharp bounds on each other. 

\begin{figure}[t]
	\centering
	\includegraphics[width=0.55\textwidth]{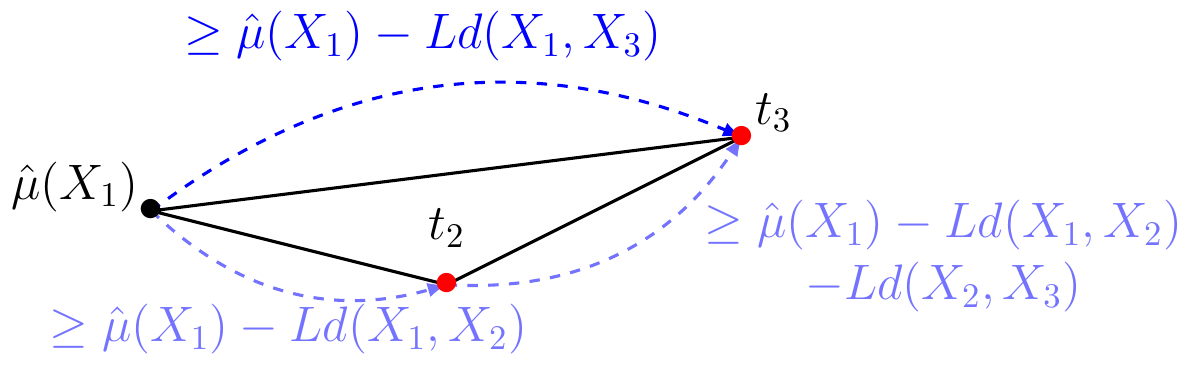}
	\caption{An example providing intuition for how, at the optimal solution of \eqref{eq:estim_lp}, constraints between pairs of points both in the no-overlap region are not active in the LP. The $i=1$ point (black) is in the overlap region, and the $i=2,3$ points (red) are in the no-overlap region. We must have $t_1=\hat{\mu}(X_1)$. The smoothness between 1 and 2 (and 1 and 3) imply bounds on $t_2$ (and $t_3$), with lower bounds shown along the blue arrows. Also shown is the lower bound implied on $t_3$ by $t_2$ when $t_2$ is set to $\hat{\mu}(X_1)-Ld(X_1, X_2)$ (which is the choice that will minimize the objective). Since $d(X_1, X_2)+d(X_2, X_3)>d(X_1, X_3)$, by the triangle inequality, the bound implied on $t_3$ by $t_2$'s bound from $t_1$ is always looser than the bound implied directly by $t_1$. Visually, the bound in dark blue is always tighter than the bound in light blue.}
	\label{fig:intuition}
\end{figure}

To see why this is the case, consider an example with $n=3$, where the $i=1$ point is in the overlap region and the other two points are in the no-overlap region. Since the objective is non-decreasing in the $t_i$, we would like $t_2$ to be as small as possible. We know that $t_2$ must satisfy $t_2\geq \hat{\mu}(X_1)-Ld(X_1, X_2)$, so we make it as small as possible by setting $t_2=\hat{\mu}(X_1)-Ld(X_2, X_1)$. Then, consider $t_3$. The lower bound on $t_3$ coming from $i=1$ is $\hat{\mu}(X_1)-Ld(X_3, X_1)$, while the lower bound coming from $i=2$ is $\hat{\mu}(X_1)-Ld(X_2, X_1)-Ld(X_3, X_2)$. The key point is that $d(X_2, X_1)+d(X_3,X_2)>d(X_3,X_1)$ by the triangle inequality, and so the bound from the overlap region is always sharper than the bound from the no-overlap region. A visual representation of this example appears in Figure~\ref{fig:intuition}. We emphasize that this no-interaction phenomenon is somewhat unusual and does not hold in general: for example, if we were to replace the Lipschitz smoothness assumption with an $\alpha$-H\"older continuity assumption, it would no longer be satisfied. 

Based on this intuition, we expect that constraints between points in the no-overlap region in \eqref{eq:estim_lp} are redundant, and the optimal solution to \eqref{eq:estim_lp} will simply set each $t_i$ in the no-overlap region to the tightest lower bound obtained from a point in the overlap region. We state this precisely in our next theorem, which requires the following assumptions.

\begin{assumption}
	\label{ass:lipschitz}
	The estimated $\hat{\mu}$ satisfies 
	\begin{equation}
		|\hat{\mu}(X_i)-\hat{\mu}(X_j)|\leq Ld(X_i, X_j), \text{ for all } X_i, X_j \text{ such that } \pi_b(X_i), \pi_b(X_j)>0.
	\end{equation}
\end{assumption}

\begin{assumption}
	\label{ass:consistency}
	The estimated $\hat{\mu}$ is consistent so that 
$
\sup_{x:\pi_b(x)>0} |\hat{\mu}(x)-\mu_{P_0}(x)|\xrightarrow{\P}0.
$
\end{assumption}
Assumption~\ref{ass:lipschitz} is necessary for \eqref{eq:estim_lp} to be feasible\textemdash otherwise we would be forced to have a pair $t_i, t_j$ in the overlap region with $|t_i-t_j|>Ld(X_i, X_j)$, and this pair would violate the Lipschitz constraints. To satisfy this assumption for moderate values of $L$, we recommend estimating $\hat{\mu}$ with smooth approximations such as parametric models, splines, or kernel methods \citep{hastie2009}. Tree-based methods, which give non-smooth approximations, may only satisfy Assumption~\ref{ass:lipschitz} for large values of $L$, unless a smoothed tree-based method such as that of \citet{friedberg2020} is used. Assumption~\ref{ass:consistency} requires consistency in the overlap region, so that $\hat{\mu}(X_i)$ is a good approximation of $\mu(X_i)$ for $X_i$ with $\pi_b(X_i)>0$. Note that Assumption~\ref{ass:consistency} makes no assumptions on the behavior of $\hat{\mu}$ in the no-overlap region.

Our next assumption is on the marginal distribution $P_{0,X}$ of $X_i$.
\begin{assumption}
	\label{ass:density}
	For every $x$ such that $\pi_b(x)>0$, either (a)~the distribution $P_{0,X}$ has an atom at $\{x\}$ or (b)~for every $\epsilon>0$, there exists $\delta>0$ such that $\P(d(X_i, x)\leq \epsilon)>\delta$.
\end{assumption}
This assumption ensures that the marginal distribution of $X_i$ does not have any ``holes'' that prevent us from observing parts of the overlap region. 

Under these assumptions, we have the following result, whose proof appears in Appendix~\ref{sec:proofs}

\begin{theorem}
	\label{thm:mu_hat_lip}
	Suppose that $P_0\in\mathcal{P}_L^{\Lip}$. Then, for $\hat{\psi}_2^-$ and $\psi_2^-$, we have that:
	\begin{enumerate}[(a)]
		\item under Assumption~\ref{ass:lipschitz}, the problem \eqref{eq:estim_lp} is feasible and has value 
			\begin{equation}
				\hat{\psi}_{\unid}^-=\frac{1}{n}\sum_{i=1}^n \pi_e(X_i)\left(\max_{j:\pi_b(X_j)>0} \hat{\mu}(X_j)-Ld(X_i, X_j)\right)\mathbf{1}\{\pi_b(X_i)=0\},
			\end{equation}
		\item the population bound is 
			\begin{equation}
				\psi_2^-= \E_{P_0}\left[ \pi_e(X_i)\left(\sup_{x: \pi_b(x)>0} \mu_{P_0}(x)-Ld(X_i, x)\right)\mathbf{1}\{\pi_b(X_i)=0\} \right],
			\end{equation}
		\item under Assumptions~\ref{ass:consistency} and \ref{ass:density}, we have $\hat{\psi}_2^-\xrightarrow{\P}\psi_2^-$.
	\end{enumerate}
\end{theorem}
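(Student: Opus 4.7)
I will treat the three parts in turn. The main content is in part (a), where I need to verify that the closed-form solution is both feasible and minimal; part (b) is an analogous population-level calculation; and part (c) reduces to a uniform convergence statement plus a law of large numbers.

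\paragraph{Part (a).} For feasibility, define $\hat{t}_i = \hat\mu(X_i)$ for $i$ with $\pi_b(X_i)>0$, and for $i$ with $\pi_b(X_i)=0$ set
\[
\hat{t}_i \;=\; \max_{j:\pi_b(X_j)>0}\bigl(\hat\mu(X_j) - Ld(X_i,X_j)\bigr).
\]
I need to verify the Lipschitz constraint for all pairs, which I will split into three cases. Overlap--overlap pairs are handled by Assumption~\ref{ass:lipschitz}. For an overlap index $j$ and no-overlap index $i$, the inequality $\hat{t}_j - \hat{t}_i \le Ld(X_i,X_j)$ follows by taking the overlap index equal to $j$ inside the max; the reverse direction, $\hat{t}_i - \hat{t}_j \le Ld(X_i,X_j)$, follows since for any overlap index $k$ achieving the max, Assumption~\ref{ass:lipschitz} plus the triangle inequality gives $\hat\mu(X_k) - Ld(X_i,X_k) - \hat\mu(X_j) \le Ld(X_j,X_k) - Ld(X_i,X_k) \le Ld(X_i,X_j)$. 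Finally, for two no-overlap indices $i,i'$ with overlap-maximizers $k,k'$, assume WLOG $\hat{t}_i \ge \hat{t}_{i'}$; the bound follows from $\hat{t}_{i'}\ge \hat\mu(X_k)-Ld(X_{i'},X_k)$ and the triangle inequality applied to $d(X_i,X_k)$, $d(X_{i'},X_k)$, $d(X_i,X_{i'})$. This is exactly the no-interaction argument sketched in Figure~\ref{fig:intuition}. Optimality then follows because the objective has non-negative coefficients on each $t_i$, so we minimize each no-overlap $t_i$ individually subject to the constraints from overlap points, and the claimed value is precisely the pointwise minimum of these constraints.

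\paragraph{Part (b).} For any $P\in\mathcal{P}_L^{\Lip}$, consistency with $P_0$ pins down both $\mu_P(x)=\mu_{P_0}(x)$ for $x$ in the overlap region (up to $P_{0,X}$-null sets, which become irrelevant once we invoke Lipschitz continuity together with the density condition of Assumption~\ref{ass:density}) and the marginal of $X_i$ on the no-overlap region. So $\psi_{\unid}(P) = \E_{P_0}[\mu_P(X_i)\pi_e(X_i)\mathbf{1}\{\pi_b(X_i)=0\}]$, and to minimize it we minimize $\mu_P(x)$ pointwise on the no-overlap region. The Lipschitz constraint gives the lower bound $\mu_P(x) \ge \sup_{x':\pi_b(x')>0}(\mu_{P_0}(x') - Ld(x,x'))$, and the McShane--Whitney extension shows this lower envelope is itself $L$-Lipschitz on $\mathcal{X}$, hence attainable as $\mu_P$ for some $P\in\mathcal{P}_L^{\Lip}$. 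Plugging this in gives the displayed expression for $\psi_2^-$.

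\paragraph{Part (c).} Write $g_n(x) = \max_{j:\pi_b(X_j)>0}(\hat\mu(X_j) - Ld(x,X_j))$ and $g(x)=\sup_{x':\pi_b(x')>0}(\mu_{P_0}(x')-Ld(x,x'))$, so that $\hat\psi_{\unid}^- = \tfrac{1}{n}\sum_{i} \pi_e(X_i)g_n(X_i)\mathbf{1}\{\pi_b(X_i)=0\}$ and $\psi_2^-=\E_{P_0}[\pi_e(X_i)g(X_i)\mathbf{1}\{\pi_b(X_i)=0\}]$. I will show $\sup_{x\in\mathcal{X}}|g_n(x)-g(x)|\xrightarrow{\P}0$, after which the conclusion follows by writing the difference as
\begin{align*}
\hat\psi_{\unid}^- - \psi_2^- &= \tfrac{1}{n}\!\sum_i \pi_e(X_i)\bigl(g_n(X_i)-g(X_i)\bigr)\mathbf{1}\{\pi_b(X_i)=0\} \\
&\quad + \Bigl(\tfrac{1}{n}\!\sum_i \pi_e(X_i)g(X_i)\mathbf{1}\{\pi_b(X_i)=0\} - \psi_2^-\Bigr),
\end{align*}
whose first term is bounded by $\|g_n-g\|_\infty$ and whose second term vanishes by the law of large numbers (noting that $g$ is bounded when $\mu_{P_0}$ is, as will follow from the smoothness assumption and the overlap region being nonempty). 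For the uniform convergence, $g_n(x)\le g(x)+\sup_{x':\pi_b(x')>0}|\hat\mu(x')-\mu_{P_0}(x')|$ by Assumption~\ref{ass:consistency}, giving one direction. For the other, fix $\epsilon>0$ and pick $x^*_\epsilon$ nearly attaining the sup in $g(x)$; Assumption~\ref{ass:density} (atom or positive-mass-in-every-ball) combined with the i.i.d.\ sampling ensures that with high probability some $X_j$ lies within any fixed $\delta>0$ of $x^*_\epsilon$, and then the $L$-Lipschitz term plus the uniform consistency from Assumption~\ref{ass:consistency} gives $g_n(x)\ge g(x)-2L\delta - o_\P(1)-\epsilon$. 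Taking $\delta,\epsilon\to 0$ completes the argument.

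\paragraph{Main obstacle.} The delicate step is the uniform convergence in part (c), because it requires the sample to approximate the sup defining $g$ uniformly in $x$; this is what forces both Assumption~\ref{ass:density} (so that overlap points are reachable by the sample) and Assumption~\ref{ass:consistency} (so that values at those points are reliable). The triangle-inequality bookkeeping in part (a)'s feasibility check is the other place where care is needed, but once the right decomposition is set up it is routine.
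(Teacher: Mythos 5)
Your parts (a) and (b) follow essentially the same route as the paper: the same candidate solution $t_i^* = \max_{j:\pi_b(X_j)>0}(\hat\mu(X_j)-Ld(X_i,X_j))$, the same three-case triangle-inequality verification of feasibility (the no-interaction property), the same pointwise lower bound for optimality, and in (b) the same lower-envelope construction (the paper verifies the McShane-type extension's Lipschitz property and agreement with $\mu_{P_0}$ on the overlap region by hand rather than citing it, and additionally writes down the attaining distribution $P^*$ explicitly as a point mass, a step you elide but which is routine). Your invocation of Assumption~\ref{ass:density} inside part (b) is unnecessary; that part is a purely population-level identity and the paper states it without any assumptions.

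Part (c) is where you genuinely diverge, and where there is a gap. You reduce everything to the \emph{uniform} convergence $\sup_{x\in\mathcal X}|g_n(x)-g(x)|\xrightarrow{\P}0$. The direction $g_n\le g+\sup_{x':\pi_b(x')>0}|\hat\mu(x')-\mu_{P_0}(x')|$ is fine, but the reverse direction requires the overlap sample to become $\delta$-dense, simultaneously, near every point $x^*_\epsilon(x)$ that near-attains the supremum defining $g(x)$ as $x$ ranges over the whole no-overlap region --- in the worst case, $\delta$-dense in the entire overlap region. Assumption~\ref{ass:density} is only a \emph{pointwise} condition (for each fixed $x$ there is some $\delta(x,\epsilon)>0$), so upgrading it to a uniform statement needs a finite covering of the overlap region by small balls, hence total boundedness of that region, neither of which is assumed (the paper only invokes compactness as an optional convenience for the supremum being attained). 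As stated, your uniform convergence claim does not follow from Assumptions~\ref{ass:consistency} and \ref{ass:density}. The paper sidesteps this entirely: it fixes a single index $i$, shows the $i$-th summand of $\hat\psi_{\unid}^-$ converges to the corresponding oracle summand using Lemmas~\ref{lem:sup_diff}--\ref{lem:nn_convergence} applied at the single (conditionally fixed) maximizer $x^*$ for that $X_i$, and then passes to the average; only pointwise denseness near one point at a time is ever needed. Your decomposition of $\hat\psi_{\unid}^--\psi_2^-$ is the right one, but to close the argument under the stated assumptions you should replace $\|g_n-g\|_\infty$ with a per-summand bound and control the average via exchangeability of the summands (plus boundedness/dominated convergence), as the paper does.
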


The results of this theorem warrant further discussion. Theorem~\ref{thm:mu_hat_lip}(a) shows that we do not need to actually solve the problem \eqref{eq:estim_lp} numerically because a closed-form solution is available thanks to the no-interaction property. This is extremely important for even moderate values of $n$, because the problem \eqref{eq:estim_lp} has $O(n^2)$ constraints; since the fastest algorithms for solving a general linear program with $d$ constraints have complexity $O(d^{2.5})$ \citep{lee2015}, solving \eqref{eq:estim_lp} has worst-case complexity $O(n^5)$. In contrast, computing the closed-form given in \eqref{eq:closed_form} requires only $O(n^2)$ operations, which is of the same order as computing all of the pairwise distances $d(X_i, X_j)$.  So we see that the result of Theorem~\ref{thm:mu_hat_lip}(a) provides substantial efficiency gains. Even in settings where our results may not be directly applicable and \eqref{eq:estim_lp} does not have a closed-form (this may be the case if, for example, additional constraints based on domain knowledge are incorporated), being able to eliminate a subset of the constraints is highly valuable. Constraint elimination is in general LP-complete \citep{telgen1983identifying}, and so it is an appealing feature of \eqref{eq:estim_lp} that we can always eliminate a subset of the constraints. 

The results in Theorem~\ref{thm:mu_hat_lip}(b,c) are of a different flavor, and characterize the statistical properties of the method. They show that the solution to \eqref{eq:estim_lp} converges to a well-defined limit in large samples, and that this limit is exactly $\inf_{P\in \mathcal{P}_L^{\Lip}} \psi_{\unid}(P)$. The proof of (b) is essentially a continuous analogue of the proof of (a), and again relies on the no-interaction property. The main idea in the proof of (c) is that we can first replace $\hat{\mu}$ in $\hat{\psi}_2^-$ with $\mu$ by consistency, and then replace the maximum over $X_j$ in the overlap region with a supremum over all points in the overlap region when $n$ is sufficiently large. The result then follows from the law of large numbers. Together, (b) and (c) show that the condition of Theorem~\ref{thm:interval_consistency} is satisfied, and also that our bound is the best possible bound on $\psi(P_0)$ under the given assumptions. Even if we had infinite data, so that we fully knew the distribution $P_0$ in the overlap region, the most we could conclude is that $\psi_2(P_0)\geq {\psi}_{\unid}^{-}$.



We conclude by noting that, in cases where even the $O(n^2)$ operations required to compute $\hat{\psi}_{\unid}^-$ are too computationally expensive, we can construct conservative approximations of $\hat{\psi}_{\unid}^-$ that are computationally more effiecnt by building on recent advances in exact and approximate nearest neighbor search; we discuss these in Appendix~\ref{sec:nn}.

\subsection{Monotonicity assumptions}

Third, to highlight the versatility of our framework, we consider placing monotonicity assumptions on $\mu_{P_0}$. For example, in medical applications, if one covariate is a risk score, we may believe that the conditional mean of the outcome should be monotone with respect to that risk score. More generally for any ordering $\prec$ on $\mathcal{X}$, we can consider the family of distributions
\begin{equation}
	\mathcal{P}_{\prec}^{\text{mono}}=\left\{ P\text{ consistent w.~} P_0: \mu_P(x_1)\leq \mu_P(x_2)\text{ when }x_1\prec x_2\right\}.
\end{equation}
As in the previous section, we can approximate the infimum in \eqref{eq:inf_sup} by solving the optimization problem
\begin{mini}|s|
	{t_{1},\cdots, t_n}{\frac{1}{n}\sum_{i=1}^n t_i\pi_e(X_i)\mathbf{1}\{\pi_b(X_i)=0\}}
					{}{}
					\addConstraint{t_i\leq t_j,\quad 1\leq i, j\leq n \ \text{ s.t. }X_i\prec X_j}
					\addConstraint{t_i-\hat{\mu}(X_i)=0,\quad 1\leq i\leq n \ \text{ s.t. }\pi_b(X_i)>0}.
					\label{eq:oracle_lp_mono}
				\end{mini}
				Note that this linear program also satisfies the \emph{no-interaction property} we defined in Section~\ref{subsec:smooth}, since the ordering $\prec$ is transitive. As a result, if we set each $t_i$ in the no-overlap region to the minimum of its lower bounds from the overlap region, the constraints between the $t_i$ in the no-overlap region will be satisfied as well. Thus, the resulting estimator will enjoy properties like those in Theorem~\ref{thm:mu_hat_lip}.  We do not fully develop such results here because our main focus in the present work is on smoothness assumptions; the purpose of this example is to demonstrate that our framework encompasses other assumptions worthy of further study.



\subsection{Further assumptions and combinations of assumptions}



The framework we present captures many other potentially interesting assumptions. For example, we can combine the two assumptions presented here, and assume that $P\in \mathcal{P}_L^{\Lip}\cap \mathcal{P}_{\ell, u}^{\bdd}$, to obtain tighter bounds than either assumption alone would give. In fact, the no-interaction property discussed after \eqref{eq:estim_lp} continues to hold in this case, and our results then extend naturally; we present these more general results in Theorem~\ref{thm:mu_hat_lip_gen} of Appendix~\ref{sec:proofs}.

A variety of other assumptions are also possible, including monotonicity of $\mu$ with respect to a partial order on the covariates $X_i$, convexity of $\mu$, smoothness of higher derivatives of $\mu$, $\alpha$-H\"older continuity of $\mu$, and compositions of these assumptions. Interestingly, these assumptions and their compositions do not necessarily satisfy the no-interaction property: for example, if we assume both Lipschitz smoothness and monotonicity, the no-interaction property no longer holds, as we show in Appendix~\ref{sec:counter}. As such, our results on smoothness, boundedness, and their composition, are both non-trivial and surprising.



\section{Rates of convergence}
\label{sec:rates}

The results of the previous section show that we can construct an estimator $\hat{\psi}_2^-$ that is consistent for the sharp partial identification bound $\psi_2^{-}$ of $\psi_2(P_0)$ under the assumption that $P_0\in \mathcal{P}_L^{\Lip}$. In this section, we provide a more precise characterization of the asymptotics of $\hat{\psi}_2^-$ by identifying the rate at which it converges to $\psi_2^{-}$ through an upper bound, and showing that this rate is optimal up to log factors through a lower bound.

\subsection{Upper bounds}

Our first result is the following theorem, which bounds the mean-squared error (MSE) of $\hat{\psi}_2^-$.

\begin{theorem}
		Let $\hat{\psi}_2^-, \psi_2^{-}$ be as in Theorem~\ref{thm:mu_hat_lip}. Then, if $X_i$ has a density that is lower bounded by a constant $b$, we have the upper bound 
			\begin{equation}
				\E[(\hat{\psi}_2^- -\psi_2^{-})^2]\leq 2\E[\|(\hat{\mu}(x)-\mu(x))\mathbf{1}\{\pi_b(x)>0\}\|_{\infty}^2]+4L^2(c_dbn)^{-2/d}+\frac{\sigma^2}{n},
				\label{eq:mse_decomp}
			\end{equation}
			where $c_d$ is the volume of the unit ball in $d$ dimensions and 
\begin{equation}
	\sigma^2=\var\left( \pi_e(X_i)\left( \sup_{x:\pi_b(x)>0} \mu_{P_0}(x)-Ld(X_i, x) \right)\mathbf{1}\{\pi_b(X_i)=0\} \right).
			\end{equation}
\end{theorem}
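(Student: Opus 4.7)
My proof strategy decomposes the error $\hat{\psi}_2^- - \psi_2^-$ into the three sources of approximation identified in the paper immediately after \eqref{eq:estim_lp}: (i)~using the estimate $\hat{\mu}$ in place of the true $\mu$, (ii)~restricting the population supremum over the overlap region to a maximum over sampled overlap points, and (iii)~replacing the population expectation by a sample average. Concretely, define
\begin{equation*}
\tilde{f}(x) := \max_{j : \pi_b(X_j) > 0} \bigl[ \mu(X_j) - L\,d(x, X_j) \bigr], \qquad f^*(x) := \sup_{y : \pi_b(y) > 0} \bigl[ \mu(y) - L\,d(x, y) \bigr],
\end{equation*}
and the two intermediate quantities
\begin{equation*}
V_1 := \frac{1}{n}\sum_{i=1}^n \pi_e(X_i)\,\tilde{f}(X_i)\,\mathbf{1}\{\pi_b(X_i)=0\}, \quad V_2 := \frac{1}{n}\sum_{i=1}^n \pi_e(X_i)\,f^*(X_i)\,\mathbf{1}\{\pi_b(X_i)=0\}.
\end{equation*}
Then $\hat{\psi}_2^- - \psi_2^- = (\hat{\psi}_2^- - V_1) + (V_1 - V_2) + (V_2 - \psi_2^-)$, with the three differences corresponding respectively to approximations (i), (ii), and (iii).

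Each piece is controlled separately. For the first, the pointwise inequality $|\max_j a_j - \max_j b_j| \leq \max_j |a_j - b_j|$ applied over indices with $\pi_b(X_j)>0$, combined with $0 \leq \pi_e \leq 1$, gives $|\hat{\psi}_2^- - V_1| \leq \|(\hat{\mu} - \mu)\mathbf{1}\{\pi_b > 0\}\|_\infty$. For the second, I observe that for each fixed $x$ the map $y \mapsto \mu(y) - L\,d(x, y)$ is $2L$-Lipschitz in $y$ (since $\mu$ is $L$-Lipschitz and $d(x,\cdot)$ is $1$-Lipschitz), so the gap between its sup over the overlap region and its max over sampled overlap points is at most $2L R_n$, where $R_n := \sup_{y : \pi_b(y) > 0} \min_{j : \pi_b(X_j) > 0} d(y, X_j)$ is the covering radius of the overlap region by the sample. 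Hence $|V_1 - V_2| \leq 2L R_n$ and thus $(V_1 - V_2)^2 \leq 4 L^2 R_n^2$. For the third, $V_2$ is a sample mean of i.i.d.\ random variables with mean $\psi_2^-$ and per-term variance $\sigma^2$, so $\E[(V_2 - \psi_2^-)^2] = \sigma^2/n$.

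Combining the three bounds by expanding $(\hat{\psi}_2^- - \psi_2^-)^2$ and handling the cross terms carefully yields the stated three-term MSE bound, provided we can further establish $\E[R_n^2] \leq (c_d b n)^{-2/d}$. For the latter, I would use the density lower bound: for any fixed $x$ in the overlap region, $\P(\min_j d(x, X_j) > r) \leq (1 - b c_d r^d)^n \leq \exp(-n b c_d r^d)$, and combining this with an $\epsilon$-net cover of the overlap region together with the tail representation $\E[R_n^2] = \int_0^\infty \P(R_n > \sqrt{t})\,dt$ produces the claimed geometric-probability rate.

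The main obstacle is the bookkeeping of constants in the combining step: the naive bound $(a + b + c)^2 \leq 3(a^2 + b^2 + c^2)$ produces loose coefficients $(3, 12, 3)$ rather than the stated $(2, 4, 1)$, so the proof must exploit the fact that $V_2 - \psi_2^-$ is a centered sample average---for instance via a conditioning argument using that $V_2$ depends only on the covariates, or via a weighted Young's inequality calibrated to the three pieces so that the variance term enters un-inflated. A secondary subtlety is that a naive covering argument for $\E[R_n^2]$ introduces a $(\log n)^{2/d}$ factor; avoiding this either requires a sharper stochastic-geometry estimate or else the stated $(c_d b n)^{-2/d}$ rate tacitly absorbs the logarithmic factor into the leading constant, consistent with the paper's remark that the rate is minimax optimal only ``up to log factors.''
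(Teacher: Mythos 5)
Your three-term decomposition matches the paper's in spirit (the paper collapses pieces (i) and (ii) into a single ``oracle'' comparison and then splits again, but the sources of error are identical), and your treatment of pieces (i) and (iii) is exactly what the paper does: the $|\max_j a_j - \max_j b_j|\le \max_j|a_j-b_j|$ bound is its Lemma~\ref{lem:sup_diff}, and the variance term is handled as the variance of an i.i.d.\ sum. The constant bookkeeping you worry about is a non-issue in substance: the paper simply uses nested two-way splits $(a+b)^2\le 2a^2+2b^2$ and is itself not careful about the resulting factors (its own argument, traced literally, does not reproduce the coefficients $(2,4,1)$ either), so no conditioning trick is needed or used.

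The genuine gap is in piece (ii). You control the max-vs-sup error by the \emph{uniform} covering radius $R_n=\sup_{y:\pi_b(y)>0}\min_j d(y,X_j)$, and you correctly observe that bounding $\E[R_n^2]$ by an $\epsilon$-net argument costs a $(\log n)^{2/d}$ factor. Your proposed escape---that the stated rate ``tacitly absorbs the logarithmic factor into the leading constant''---does not work: a factor growing with $n$ cannot be absorbed into a constant, and the paper's ``up to log factors'' remark refers to the gap between the $\hat{\mu}$-estimation term and the minimax lower bound, not to this term. The missing idea is that a uniform covering radius is not needed. For each $X_i$ in the no-overlap region, let $x^*=x^*(X_i)$ be the point attaining $\sup_{x:\pi_b(x)>0}\mu_{P_0}(x)-Ld(X_i,x)$; the paper's Lemma~\ref{lem:max_sup_bound} shows the max-vs-sup gap is at most $2L\,d(x^*,X_{j^*})$ where $X_{j^*}$ is the sampled overlap point nearest to \emph{that single point} $x^*$. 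Conditionally on $X_i$, $x^*$ is a fixed point, so $\E[d(x^*,X_{j^*})^2]$ is an ordinary nearest-neighbor distance to a fixed location, and the tail integral $\int_0^\infty \P(d(x^*,X_{j^*})\ge\sqrt{t})\,dt\le\int_0^\infty(1-c_d b\,t^{d/2})_+^n\,dt\le 2(c_d b n)^{-2/d}$ (the paper's Lemma~\ref{lem:expected_dist}) gives the rate with no net, no union bound, and no logarithm. Replacing your $R_n$ with this pointwise quantity is what closes the argument.
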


The bound \eqref{eq:mse_decomp} contains three terms: the first term corresponds to the error incurred from the estimation of $\mu$ by $\hat{\mu}$; the second term corresponds to the error incurred from approximating the supremum over the overlap region by a maximum over the observed data points; the third term is the irreducible noise we would face even if we had full knowledge of $\mu$ and could compute the supremum. 

Classical results from the literature on non-parametric regression show that the minimax rate for estimating an $L$-Lipschitz function in $d$ dimensions is of order $(\log n/n)^{-2/(d+2)}$, and that this rate is obtained by a kernel estimator \citep{stone1982}. Thus, the dominant term of \eqref{eq:mse_decomp} is the first one, and the MSE of $\hat{\psi}_2^-$ is $O( (\log n/n)^{-2/(d+2)})$ when using an appropriate kernel estimator.

\subsection{Lower bounds}

Next we present a nearly matching lower bound, proven in Appendix~\ref{sec:proofs} through a LeCam two-point argument that extends lower bound constructions for estimating Lipschitz functions at a point to the off-policy evaluation setting \citep{wainwright2019}.
In the set-up of this theorem, we make an assumption on the support of $\pi_b$ to simplify the lower bound construction; we expect that the same rate holds regardless of the geometry of the support of $\pi_b$.

\begin{theorem}
	\label{thm:lower_bound}
	Let $\psi_2^{-}$ be as in Theorem~\ref{thm:mu_hat_lip}, suppose that the covariates $X_i$ takes values in $[-1,1]^d$, and that the support of the policy $\pi_b$ is $[-1/2, 1/2]^d$. Then, if $n\geq 2^{1-d}L^2$, we have the minimax lower bound
	\begin{equation}
		\inf_{\hat{\psi}_2^-}\sup_{P \in \mathcal{P}_L^{\Lip}}\E_P\left[ (\hat{\psi}_2^--\psi_2^{-})^2 \right]\geq \frac{1}{16}\E[\pi_e(X_i)\mathbf{1}\{\pi_b(X_i)=0\}]^2(2n)^{-2/(d+2)}(4L)^{-2d/(d+2)}.
	\end{equation}
\end{theorem}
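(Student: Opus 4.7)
The plan is to apply Le~Cam's two-point method: I construct two distributions $P_0, P_1 \in \mathcal{P}_L^{\Lip}$ that share a common marginal over $X$, a common behavior policy $\pi_b$ supported on $[-1/2,1/2]^d$, and common Gaussian observation noise, but differ in their conditional mean functions $\mu_P$. Le~Cam's inequality then gives
\[ \inf_{\hat\psi_2^-}\sup_{P\in\{P_0,P_1\}} \E_P[(\hat{\psi}_2^- - \psi_2^-(P))^2] \ \geq\ \tfrac{1}{8}\,|\psi_2^-(P_0)-\psi_2^-(P_1)|^2\,\bigl(1 - \|P_0^n - P_1^n\|_{TV}\bigr), \]
and the task reduces to choosing the two distributions to make the gap in the target large while keeping the TV distance at most $1/2$.

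For the construction I would take $\mu_{P_0} \equiv 0$ and $\mu_{P_1}(x) = \max\{0,\, c - L\|x - x_0\|\}$, a cone bump of height $c$ and radius $c/L$ centered at a strategically placed boundary point $x_0$ of the overlap cube $[-1/2,1/2]^d$, for instance a vertex. Both functions are $L$-Lipschitz, so $P_0, P_1 \in \mathcal{P}_L^{\Lip}$. Under unit-variance Gaussian observation noise,
\[ \mathrm{KL}(P_1^n \| P_0^n) \ =\ \tfrac{n}{2}\,\E\!\bigl[\pi_b(X_i)\,\mu_{P_1}(X_i)^2\bigr] \ \lesssim\ n c^{d+2}/L^d, \]
and Pinsker's inequality bounds $\|P_0^n - P_1^n\|_{TV}$. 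Choosing $c \asymp (L^d/n)^{1/(d+2)}$ keeps this TV below $1/2$, and is the source of the exponent $2/(d+2)$ in the final MSE rate.

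For the gap in the target, I would use the closed-form expression from Theorem~\ref{thm:mu_hat_lip}(b) to write $\psi_2^-(P)$ as the integral of $\pi_e(X_i)\,\phi_P(X_i)\,\mathbf{1}\{\pi_b(X_i)=0\}$, where $\phi_P(X_i) := \sup_{x:\pi_b(x)>0}[\mu_P(x) - L\,d(X_i,x)]$ is the Lipschitz lower envelope of $\mu_P$ extended to $X_i$. A direct computation shows that for $X_i$ in the ``shadow'' of the bump---the set of no-overlap points whose nearest overlap point is $x_0$---the shift $\phi_{P_1}(X_i) - \phi_{P_0}(X_i)$ equals exactly $c$. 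Placing $x_0$ at a vertex of the overlap cube makes this shadow a full $d$-dimensional corner of $[-1,1]^d$ rather than a lower-dimensional set, so that $|\psi_2^-(P_0) - \psi_2^-(P_1)| \gtrsim c\cdot \E[\pi_e(X_i)\mathbf{1}\{\pi_b(X_i)=0\}]$ up to a geometric constant. Substituting back into Le~Cam's inequality and using the choice of $c$ above yields the minimax bound; the side condition $n \geq 2^{1-d}L^2$ ensures the bump radius $c/L$ fits inside the ambient $[-1,1]^d$.

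The main obstacle is the shadow computation. For arbitrary boundary placements of $x_0$ the shift in $\phi_P$ may be nonzero only on a lower-dimensional set (as occurs for bumps placed on the relative interior of a face of the cube), producing only an $O(1/\sqrt{n})$ lower bound rather than the nonparametric rate. Vertex placement avoids this degeneracy because the projection from the adjacent corner of $[-1,1]^d$ onto $[-1/2,1/2]^d$ is constant, equal to $x_0$; after establishing this, verifying that the supremum defining $\phi_{P_1}(X_i)$ is actually attained at (or approached via sequences converging to) $x_0$ reduces to a triangle-inequality argument balancing the bump's peak against the Lipschitz extension from overlap points outside the bump's support.
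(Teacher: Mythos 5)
Your overall strategy---Le~Cam's two-point method with $\mu_{P_0}\equiv 0$ against a Lipschitz bump attached to the boundary of the overlap cube, Gaussian noise, and a Pinsker/KL computation yielding the $\epsilon^{d+2}$ scaling---is exactly the paper's, and the side-condition bookkeeping matches. The gap is in the separation step. Your bump is a cone of height $c$ supported near a single vertex $x_0$, so the shift $\phi_{P_1}(X_i)-\phi_{P_0}(X_i)=c$ is guaranteed only on the shadow $[1/2,1]^d$ (the no-overlap points whose projection onto $[-1/2,1/2]^d$ is $x_0$); elsewhere the supremum defining $\phi_{P_1}(X_i)$ may still be attained at the projection of $X_i$, which lies outside the bump's support, and the shift is zero. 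Consequently your separation is $c\,\E[\pi_e(X_i)\mathbf{1}\{X_i\in[1/2,1]^d\}]$, not $c\,\E[\pi_e(X_i)\mathbf{1}\{\pi_b(X_i)=0\}]$ ``up to a geometric constant'': the ratio between these depends on where $\pi_e$ places its no-overlap mass and can be arbitrarily small (zero if $\pi_e$ vanishes on the corner), so the stated prefactor $\tfrac{1}{16}\E[\pi_e(X_i)\mathbf{1}\{\pi_b(X_i)=0\}]^2$ is not recovered. The paper avoids this by making the perturbation equal to $\epsilon$ on the \emph{entire} boundary of $[-1/2,1/2]^d$, supported on the thin shell $[-1/2-L\epsilon,1/2+L\epsilon]^d\setminus[-1/2+L\epsilon,1/2-L\epsilon]^d$ and decaying linearly to zero on both sides; then for every no-overlap $X_i$ the projection $f(X_i)$ lands on the boundary where $\mu_2=\epsilon$, giving $\psi_2^-(P_2)\geq \psi_2^-(P_1)+\epsilon\,\E[\pi_e(X_i)\mathbf{1}\{\pi_b(X_i)=0\}]$ with no geometric loss.

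A secondary consequence of the localization is that your KL scales as $c^{d+2}/L^d$ (bump volume $(c/L)^d$) whereas the paper's shell gives roughly $(4L)^d\epsilon^{d+2}$ (shell thickness $\propto L\epsilon$), so your optimal choice $c\asymp(L^d/n)^{1/(d+2)}$ produces an $L^{+2d/(d+2)}$ factor rather than the stated $(4L)^{-2d/(d+2)}$. That is a legitimate lower bound for a different corner of the parameter space, but it is not the inequality in the theorem. To prove the statement as written, replace the vertex cone with the full boundary shell; the rest of your argument (Lipschitz verification, Pinsker, the choice of $\epsilon$, and the condition $n\geq 2^{1-d}L^2$ ensuring $\epsilon<1/(2L)$) then goes through as in the paper.
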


The lower bound of Theorem~\ref{thm:lower_bound} shows that any estimator must have an MSE that is at least of order $n^{-2/(d+2)}$, showing that the rate achieved by $\hat{\psi}_2^-$ is optimal up to log-factors. To provide further insight on why we obtain an $n^{-2/(d+2)}$ rate in this problem, note that the estimand $\psi_2^{-}$ depends on a supremum of over $\mu$ the overlap region. Thus, we essentially need to be able to estimate the value of $\mu$ at a particular point, namely the point at which the supremum is attained. The minimax lower bound for estimating an $L$-Lipschitz function in $d$ dimensions at a point is $n^{-2/(d+2)}$, and so we inherit this rate for this estimation of $\psi_2^{-}$. 

Based on this intuition, we expect that similar lower bounds will continue to hold for for any non-parametric assumption that requires estimation of $\mu$ in the overlap region. For example, estimating partial identification bounds under $\mathcal{P}_{\prec}^{\text{mono}}$ will also require estimation of $\mu$ in the overlap region, and thus will have a similar $n^{-2/(d+2)}$ rate. In contrast, estimation of partial identification bounds under the boundedness assumption of $\mathcal{P}_{\ell, u}^{\bdd}$ does not require estimating $\mu$ in the overlap region, and thus can be done at the fast $n^{-1}$ rate. 

These distinctions then lead to differences in inferential procedures: inference for the partial identification bounds under a boundedness assumption can be done using the confidence intervals of \citet{imbens2004}, but performing similar inferences under our non-parametric smoothness assumptions is more challenging. The slow rate of convergence precludes the possibility of obtaining a central limit theorem for $\hat{\psi}_2^-$ at the $\sqrt{n}$-rate, suggesting that standard methods for constructing confidence intervals are not applicable, and that the construction of such confidence intervals is an exciting direction for future work.


\section{Experiments}
\label{sec:experiments}

We now demonstrate our methods in two semi-synthetic settings. In the first setting, we study the coverage guarantees of Theorem~\ref{thm:interval_consistency} and compare the width of our intervals to the width of the intervals of \citet{ben2021}; in the second, we demonstrate the utility of our methods in a real-world setting. For another example of how our methods perform on real data, we refer interested readers to \citet{saveski2023}, which considers the partial identification approaches based on smoothness and monotonicity presented in this work when evaluating policies in peer review management systems. 

\begin{table}
	\centering
	\resizebox{\linewidth}{!}{
	\begin{tabular}{lllllll}
		\toprule
		$L$&$n=1000$&$n=2000$&$n=3000$&$n=4000$&$n=5000$&$n=10000$\\
		\midrule
		$1$&0.408 (0.785)&0.0003 (0.002)&0.0 (0.00)&0.0 (0.00)&0 (0.00)&0 (0.00)\\
		$2$&0.597 (1.00)&0.725 (1.00)&0.789 (1.00)&0.827 (1.00)&0.859 (0.995)&0.095 (0.106)\\
		$3$&0.646 (1.00)&0.772 (1.00)&0.828 (1.00)&0.869 (1.00)&0.897 (1.00)&0.969 (1.00)\\
		$4$&0.682 (1.00)&0.802 (1.00)&0.853 (1.00)&0.891 (1.00)&0.919 (1.00)&0.980 (1.00)\\
		$5$&0.706 (1.00)&0.820 (1.00)&0.868 (1.00)&0.907 (1.00)&0.931 (1.00)&0.984 (1.00)\\
		$\infty$&0.750 (1.00)&0.861 (1.00)&0.907 (1.00)&0.935 (1.00)&0.956 (1.00)&0.992 (1.00)\\
		\bottomrule
	\end{tabular}
	}

	\caption{Coverage (as defined in Theorem~\ref{thm:interval_consistency} with $\epsilon=0.01$) of partial identification intervals for the value of $\pi_e$ on the yeast dataset at a range of samples sizes $n$ and smoothness parameters $L$. Also shown in parentheses are the rates at which Assumption~\ref{ass:lipschitz} is satisfied. All results are averaged over 10,000 replications. We see that the Lipschitz assumption with $L=1,2$ likely does not hold, since Assumption~\ref{ass:lipschitz} is not satisfied in larger sample sizes, while the Lipschitz assumption for larger values of $L$ seems to hold, since Assumption~\ref{ass:lipschitz} is satisfied and coverages approach the desired 100\% rate. \vspace*{0.1in}}
	\label{tab:yeast}
\end{table}

\subsection{Yeast dataset}

\paragraph{Coverage analysis.}
Following prior work on off-policy evaluation, we conduct a semi-synthetic experiment by converting a classic multi-class classification dataset into an off-policy evaluation dataset \citep{dudik2011, wang2017, wu2018, su2020, zhan2021}. Like those prior works, we use the yeast data set from the UCI repository \citep{Dua:2019}, which consists of $n=1,484$ observations of data points $(X_i, \tilde{Y}_i)$, where $X_i$ is a covariate vector of length $d=8$ all of whose entries lie in $[0,1]$ and $\tilde{Y}_i$ is a class label indicating one of 10 classes. Of these 10 classes, 6 are quite rare, so we remove them from the dataset to ensure that all classes will be represented when resampling from the data. This leaves $n=1,299$ observations, each of which is labelled with one of 4 classes. 

We treat the empirical distribution of the observed data as the true data-generating distribution $P_0$, and sample with replacement from the observed data to generate samples of different sizes, $n$. The action set $\mathcal{A}$ is the set of 4 classes represented in the data, and the evaluation policy, $\pi_e$, samples actions from the fitted probabilities of a logistic regression. The behavior policy, $\pi_b$, samples actions from the fitted probabilities of the same logistic regression, but with probabilities below 0.05 set to zero. This configuration leads to an overlap violation between the behavior and evaluation policies. The outcomes $Y_i$ are 1 when the action taken matches the true class label $\tilde{Y}_i$, and 0 otherwise. 

We aim to study the guarantees of Theorem~\ref{thm:interval_consistency} and the role of the smoothness parameter $L$. Using the methods of Section~\ref{sec:estimator}, we can generate partial identification intervals for the value of $\pi_e$ under the assumption that $P_0\in \mathcal{P}_{L}^{\Lip}\cap \mathcal{P}_{0, 1}^{\bdd}$, where the Lipschitz assumption is made with respect to Euclidean distance in the covariate space. We fit $\hat{\mu}$ using a logistic regression with the default regularization of scikit-learn \citep{scikit-learn}. Table~\ref{tab:yeast} shows the rates at which these intervals cover the true off-policy value, in the sense of Theorem~\ref{thm:interval_consistency} with $\epsilon=0.01$, for a range of values of $L$. Also shown in parentheses are the fraction of times that $\hat{\mu}$ is $L$-Lipschitz in the overlap region, i.e., the fraction of times that Assumption~\ref{ass:lipschitz} is satisfied. When Assumption~\ref{ass:lipschitz} is not satisfied, the partial identification bounds are undefined, and so we never cover the off-policy value. Note that $L=\infty$ corresponds to pure Manski bounding, or equivalently to assuming only that $P_0\in \mathcal{P}_{0,1}^{\bdd}$, an assumption that is satisfied by construction in this example. We provide a plot of the results of Table~\ref{tab:yeast}, as well as results with $\epsilon=0.005$, in Appendix~\ref{sec:app_sims}.

In Table~\ref{tab:yeast}, we see a sharp distinction between $L=1, 2$ and values of $L$ greater than 2. For the small values of $L$, the problem gradually becomes ceases to become feasible for larger values of $n$, and so the resulting intervals are undefined and rarely cover the off-policy value. For larger values of $L$, the problem is always feasible at all values of $n$, and the coverage increases with $n$, approaching the desired 100\% coverage in large sample sizes. In particular, the fact that the coverage of intervals constructed assuming, for example, that $L=4$ or $L=5$ (an assumption that is not satisfied by construction) is close to the coverage of intervals satisfied by the $L=\infty$ boundedness assumption (which is satisfied by construction), suggests that these smoothness assumption for these larger values of $L$ is quite plausible. As a point of reference, the $L=5$ assumption on these covariates in the Euclidean metric means, for example, that if two units agree in all but one covariate, and disagree by 0.2 in that covariate, their expected outcome can differ by 1, so they do not place any bounds on each other. 


\paragraph{Interval width analysis.}
We now study the width of our intervals, using the intervals proposed by \citet{ben2021} as a baseline, on the Yeast data. This baseline method constructs lower bounds on the off-policy value using a simultaneous confidence interval for $\hat{\mu}$, rather than the fitted $\hat{\mu}$ itself. We use the same dataset and construct the behavior and evaluation policies in the same way as in the previous section, but discretize the original covariates $X_i$ according to the map $X_i\mapsto \mathbf{1}\{X_i<0.5\}$.
We discretize since, as the authors descibe, constructing the simultaneous confidence interval needed by the baseline method is challenging in the case of continuous covariates.

Table~\ref{tab:bm_compare} shows the results for $L=1$ and a range of sample sizes. We find that the intervals obtained using the baseline methods are, in this setting, consistently 40-50\% wider than those obtained using the proposed method. This conservativeness is induced by the simultaneous confidence interval construction, which the baseline method relies on to obtain better theoretical guarantees for policy learning, but is needlessly loose for policy evaluation.

\begin{table}
	\centering
	\begin{tabular}{lllllll}
		\toprule
		&\multicolumn{5}{c}{Sample size, $n$}\\
		\cmidrule{2-7}
		Method &500&1000&1500&2000&2500&3000\\
		\midrule
Proposed&4.37&3.09&2.59&2.15&1.80&1.68\\
Baseline&6.57&4.89&3.89&3.18&2.64&2.41\\
	\midrule
Ratio&1.50&1.58&1.50&1.48&1.47&1.43\\
		\bottomrule
	\end{tabular}
	\caption{
	Average width of the intervals obtained using the proposed method in semi-synthetic experiments on the Yeast dataset.
	We compare to the intervals of a baseline method proposed by \citet{ben2021} and compute the ratio between the width of the two intervals.
	We set $L=1$, average over 1000 trials, and scale the results by $10^3$ for readability.
	We find that the intervals of baseline method are 40-50\% wider than the proposed method.
}

	\label{tab:bm_compare}
\end{table}

\subsection{Yahoo!\ Front Page Today dataset.} Our second experiment uses the Yahoo Webscope's featured news dataset, a standard benchmark for OPE algorithms \citep{yahoo, li2010, li2011}.

This dataset was collected over 10 days in May 2009, and consists of observations of user visits and actions on the front page of the Yahoo!\ web portal. Each observation is a tuple $(X_i, A_i, Y_i)$, where $X_i$ is a 5-dimensional vector of covariates, all of which lie in $[0,1]$, $A_i$ is an article shown to the user in a featured position on the page, and $Y_i$ is a binary response indicating whether or not the user clicked. Additionally, each $A_i$ is accompanied by a 5-dimensional covariate vector $V_i$. The articles $A_i$ are chosen from a hand-curated pool of articles that is updated each hour. We restrict our focus to a single hour of data so that the articles are drawn from a fixed pool $\mathcal{A}$ of articles, considering $n=16,628$ data points. Then, this problem has the form of a multi-armed bandit problem, as described in Section~\ref{sec:model}, with action space $\mathcal{A}$.

The architects of this dataset sampled $A_i\sim \text{Unif}(\mathcal{A})$. Uniform sampling guarantees overlap, but requires taking many sub-optimal actions. A tuned, non-uniform logging policy would clearly be preferable. With our methods, we show that it is possible to run a non-uniform policy and yet still evaluate other policies that do not satisfy overlap.

Specifically, we consider the following scenario: from historical data, we know that the user covariate $X_{i,3}$ is positively correlated with the response $Y_i$ and that the article covariate $V_{i,0}$ is positively correlated with $Y_i$. Based on this knowledge, we believe that we should avoid showing articles with low values of $V_{i,0}$ to users with low values of $X_{i,3}$. This leads us to consider the family of policies 
\begin{equation}
	\pi^{(T)}(X_i, a)=\left\{ \begin{aligned} &1/|\mathcal{A}|&&\text{ if }&&& X_{i,3}>T,\\
	&\mathbf{1}\{a\in \mathcal{A}^*\}/|\mathcal{A}^*|&&\text{ if }&&&X_{i,3}\leq T,\end{aligned}\right.
	\label{eq:pi_T}
\end{equation}
where $\mathcal{A}^*$ is a subset of articles we expect to perform well and $T$ is a cut-off that identifies users who are unlikely to click. Here, we take $\mathcal{A}^*$ to be the set of articles with above median values of $V_{i,0}$.

We suppose that we have deployed the policy $\pi^{(T)}$ with $T=0.5$ (and simulate having done so by subsampling), and are interested in exploring other values of the threshold $T$. The subsampled dataset contains $n=10,086$ data points. For values of $T \ge 0.5$, the behavior policy provides full support for the evaluation policy, and no partial identification is required. For values of $T<0.5$, there is an overlap violation, since $\pi^{(T)}$ can show users with $X_{i,3}\in (T, 0.5)$ articles $a\in \mathcal{A}\setminus \mathcal{A}^*$, an action that $\pi^{(0.5)}$ assigns zero probability. 

\begin{figure}[t]
	\centering
	\includegraphics[width=\textwidth]{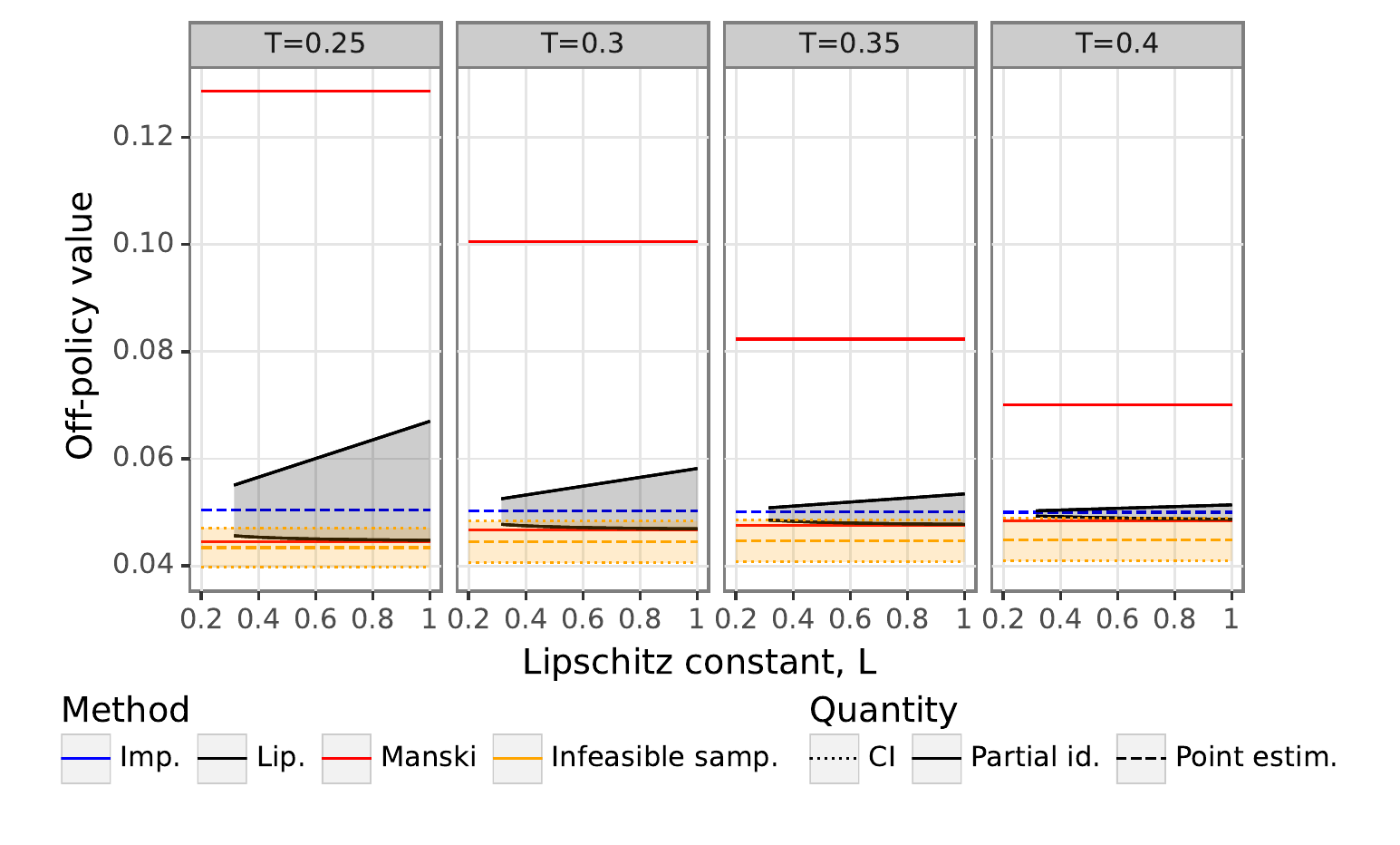}
	\caption{Partial identification bounds under the assumption that $P_0\in \mathcal{P}_L^{\Lip}\cap \mathcal{P}_{0,1}^{\bdd}$ (black), Manski bounds under the assumption that $P_0\in \mathcal{P}_{0, 1}^{\bdd}$ (grey), and pure imputation estimates (blue) of the value of the policy $\pi^{(T)}$ estimated using data from the behavior policy $\pi^{(0.5)}$ for a range of $T$ and $L$. Also shown are the point estimate and confidence intervals estimated using infeasible sample data from the uniform behavior policy. We see that the pure imputation estimate overestimates the value of $\pi^{(T)}$, but that our partial identification bounds correct for this. Crucially, the width of our partial identification interval increases as the model estimate and infeasible sample estimate diverge, meaning that we are correctly adjusting for the model's ability to extrapolate into the no-overlap region. For reference, there are overlap violations for 16.8\%, 10.8\%, 7.0\%, and 4.3\% of the points when $T=0.25$, $T=0.3$, $T=0.35$, and $T=0.4$ respectively.}
	\label{fig:yahoo_ope_manski}
\end{figure}

Since we have overlap violations, we use the infimum estimator of Theorem~\ref{thm:mu_hat_lip} (and its analogue for the supremum) to construct partial identification intervals $[\hat{\psi}^-, \hat{\psi}^+]$ as in Theorem~\ref{thm:interval_consistency} under the assumption that $P_0\in \mathcal{P}_L^{\Lip}\cap \mathcal{P}_{0,1}^{\bdd}$, where the Lipschitz assumption is again made with respect to Euclidean distance in the covariate space. Our results are shown in Figure~\ref{fig:yahoo_ope_manski}, which plots the interval estimators $[\hat{\psi}^-, \hat{\psi}^+]$ of the value of $\pi^{(T)}$ under Lipschitz assumptions for $T=0.25, 0.3,0.35, 0.5$, with $\hat{\mu}$ fit using a logistic regression with the default regularization of \cite{scikit-learn} (which leads to smooth $\hat{\mu}$), for a range of Lipschitz parameters $L$. For results with larger values of $L$ and $T$, we refer interested readers to Appendix~\ref{sec:app_sims}. As a point of reference, the $L=1$ assumption on this data means, for example, that if two units agree in all but one covariate, and differ by the maximum possible value of 1 in that covariate, their expected outcome can differ by the maximum possible value of 1, so they place no bounds on each other. 

Also shown are the point estimates obtained by using model predictions without any partial identification, as well as an infeasible sample estimate (and accompanying confidence intervals) of the value of $\pi^{(T)}$ as estimated on data from a uniform policy. This infeasible sample estimate corresponds to the results of the experiment we would run (at considerable cost) if unwilling to rely on smoothness assumptions. 

The key takeaway from Figure~\ref{fig:yahoo_ope_manski} is that, as $T$ increases, and the imputation estimate and infeasible sample estimate grow closer, the width of our partial identification interval correspondingly decreases. This is because, as $T$ becomes smaller, the amount of units in the no-overlap region $\{X_i: T\leq X_{i,3}\leq 0.5\}$ increases, and the maximum distance between points in the overlap region and points in the no-overlap region increases. Our method accounts for this, and correctly distinguishes cases where the model needs to be corrected only slightly from cases where it must be corrected substantially. Furthermore, for larger values of $L$, our intervals consistently overlap with the 95\% confidence interval from the alternative (``costly'') experiment under a uniform policy. 

We can also compare against the Manski partial identification regions obtained solely from the assumption that $P\in \mathcal{P}_{0,1}^{\text{bdd}}$, shown in gray. For instance, one may be concerned that the intersection between our intervals and the 95\% confidence interval is not large, but because this is true for the Manski intervals as well (whose non-parametric assumption holds exactly), we conclude that the small intersection is due to error in the estimation of $\hat{\psi}_1$. More importantly, we see that our intervals are much narrower than the Manski intervals for small values of $L$ and converge to them as $L\to \infty$. For example, when $L=1$, our intervals are 73.5\%, 79.1\%, 83.7\%, and 91.5\% narrower than the Manksi intervals for $T=0.25$, $T=0.3$, $T=0.35$, and $T=0.4$ respectively.

\section{Discussion}


In this paper we have developed partial identification results for off-policy evaluation under smoothness assumptions, shown that our bounds have good asymptotic properties, and demonstrated their value on a real-data example. 

When using our methods, we do not recommend pre-committing to a particular value of $L$. Rather, we recommend defining criteria a policy should meet and evaluating the plausibility of the smoothness assumptions necessary to ensure that those criteria are met. For example, in the setting of Figure~\ref{fig:yahoo_ope_manski}, if we were interested in deploying the policy $\pi^{(0.3)}$, but only wanted to deploy it if it had at least a 4.5\% click rate, we would be able to ensure this as long as $L$ is at most $0.5$. The plausibility of the $L=0.5$ assumption can then be verified based on historical data and domain knowledge. 

In cases where we would like to report a single interval estimate of the off-policy value, we must choose a particular value of $L$. We can lower bound the value of $L$ by the maximum value of $|\hat{\mu}(X_i)-\hat{\mu}(X_j)|/d(X_i, X_j)$ in the overlap region. This lower bound comes with two caveats: first, it is sensitive to the choice of $\hat{\mu}$, and we recommend fitting $\hat{\mu}$ using smooth methods to obtain moderate values of $L$. Second, this lower bound is, as advertised, only a lower bound\textemdash the conditional outcome may be much less smooth in the no-overlap region, making the true value of $L$ much larger. Choosing between values of $L$ above this lower bound requires a subjective judgement about the structure of the problem, and can only be justified by problem-specific considerations. 

Finally, since our methods are tied to a particular choice of covariate space and metric, it is crucial to consider whether or not the smoothness assumption is plausible and useful in that space and metric. For example, in an extremely high-dimensional space, most points may be quite far from each other, and so bounds obtained from the Lipschitz assumption will be quite loose. Similarly, in a categorical covariate space with a discrete metric like the Hamming distance, the expected outcome may only be smooth with large values of $L$, since the mean response in one category may be very different from the mean response in another. The tighter bounds we give are only as useful as the underlying assumptions are plausible, and so it is essential to combine our methodological proposals with thoughtful application. 

\paragraph{Acknowledgements} This work was supported in part by NSF CAREER Award \#2143176. We are grateful to Kevin Guo, Tal Wagner, Nihar Shah, and Steven Jecmen for helpful discussions. 

\bibliographystyle{plainnat}
\bibliography{lipschitz_ope}

\appendix
\section{Proofs of results}
\label{sec:proofs}

\subsection{Proof of Theorem~\ref{thm:interval_consistency}}

\begin{proof}[Proof of Lemma~\ref{thm:interval_consistency}]
	We begin by writing 
	\begin{align}
		\P(\hat{\psi}^--\epsilon\leq \psi(P_0)\leq \hat{\psi}^++\epsilon)&= 1-\P(\psi(P_0)<\hat{\psi}^--\epsilon\text{ or }\psi(P_0)>\hat{\psi}^++\epsilon),\\
		&\geq 1-\P(\psi(P_0)\leq \hat{\psi}^--\epsilon)-\P(\psi(P_0)>\hat{\psi}^++\epsilon),\label{eq:union_bound}
	\end{align}
	by the union bound. We now show that $\P(\psi(P_0)>\hat{\psi}^++\epsilon)\to 0$. An analogous argument shows that $\P(\psi(P_0)<\hat{\psi}^--\epsilon)\to 0$ as well, and these two facts along with \eqref{eq:union_bound} imply the result. 

	We have 
	\begin{align}
		\P(\psi(P_0)>\hat{\psi}^++\epsilon)&= \P(\psi_1(P_0)+\psi_2(P_0)>\hat{\psi}_1+\hat{\psi}_2^++\epsilon),\\
		&\leq \P(\psi_1(P_0)>\hat{\psi}_1+\epsilon/2)+\P(\psi_2(P_0)>\hat{\psi}_2^++\epsilon/2),\\
		&\leq \P(\psi_1(P_0)>\hat{\psi}_1+\epsilon/2)+\P\left(\sup_{P\in\mathcal{P}} \psi_2(P)>\hat{\psi}_2^++\epsilon/2\right),\\
		&\leq \P(|\psi_1(P_0)-\hat{\psi}_1|>\epsilon/2)+\P\left( \left|\sup_{P\in\mathcal{P}}\psi_2(P)-\hat{\psi}_2^+\right|>\epsilon/2 \right).\label{eq:consistency_bounds}
	\end{align}
	Both terms of \eqref{eq:consistency_bounds} are $o(1)$ under the given consistency assumptions, so we conclude that $\P(\psi(P_0)>\hat{\psi}^++\epsilon)=o(1)$ as desired.
\end{proof}

\subsection{Proof of Theorem~\ref{thm:mu_hat_lip}}
As discussed in the main text, we in fact prove a generalization of Theorem~\ref{thm:mu_hat_lip} that is based on the assumption that $P_0\in \mathcal{P}_L^{\Lip}\cap \mathcal{P}_{\ell, u}^{\text{bdd}}$. In this setting, we define $\hat{\psi}_2^-$ as the solution to the following optimization problem:
\begin{mini}|s|
	{t_1,\cdots, t_n}{\frac{1}{n}\sum_{i=1}^n t_i\pi_e(X_i)\mathbf{1}\{\pi_b(X_i)=0\}}
					{}{}
					\addConstraint{|t_i-t_j|\leq Ld(X_i, X_j),\quad 1\leq i<j\leq n}
					\addConstraint{t_i-\hat{\mu}(X_i)=0,\quad 1\leq i\leq n\text{ s.t. }\pi_b(X_i)>0}
					\addConstraint{\ell \leq t_i\leq u,\quad 1\leq i\leq n}
					\label{eq:estim_lp_gen}
				\end{mini}
				Then, we have the following result.
\begin{theorem}
	\label{thm:mu_hat_lip_gen}
	Suppose that $P_0\in \mathcal{P}_L^{\Lip}\cap \mathcal{P}_{\ell,u}^{\bdd}$. Then, for $\hat{\psi}_2^-$ and $\psi_2^-$, we have that:
	\begin{enumerate}[(a)]
		\item under Assumption~\ref{ass:lipschitz}, the problem \eqref{eq:estim_lp} is feasible and has value 
			\begin{equation}
				\hat{\psi}_2^-=\frac{1}{n}\sum_{i=1}^n \pi_e(X_i)\left( \ell \vee \max_{j:\pi_b(X_j)>0} \hat{\mu}(X_j)-Ld(X_i, X_j)\right)\mathbf{1}\{\pi_b(X_i)=0\};
				\label{eq:closed_form}
			\end{equation}
		\item the population bound is  
			\begin{equation}
				\psi_2^{-}=\E\left[ \pi_e(X_i)\left(\ell \vee \sup_{x: \pi_b(x)>0} \mu_{P_0}(x)-Ld(X_i, x)\right)\mathbf{1}\{\pi_b(X_i)=0\} \right];
				\label{eq:manski_sup}
			\end{equation}
		\item under Assumptions~\ref{ass:consistency} and \ref{ass:density}, we have $\hat{\psi}_2^-\xrightarrow{\P}\psi_2^-$
	\end{enumerate}
\end{theorem}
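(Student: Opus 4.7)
\textbf{Proof plan for Theorem~\ref{thm:mu_hat_lip_gen}.} For part (a), I would define the candidate
\[
\hat{t}_i := \hat{\mu}(X_i) \text{ if } \pi_b(X_i)>0, \qquad \hat{t}_i := \ell \vee \max_{j:\pi_b(X_j)>0}\bigl(\hat{\mu}(X_j)-Ld(X_i,X_j)\bigr) \text{ if } \pi_b(X_i)=0,
\]
and verify that $(\hat{t}_i)$ is feasible for \eqref{eq:estim_lp_gen}. The Lipschitz constraint among overlap points is Assumption~\ref{ass:lipschitz}. For an overlap--no-overlap pair $(j,i)$ with $\pi_b(X_j)>0$, $\pi_b(X_i)=0$, one inequality holds by construction of the max, and the other reduces, via Assumption~\ref{ass:lipschitz} and the triangle inequality, to $\hat\mu(X_k)-\hat\mu(X_j)\leq L(d(X_i,X_k)+d(X_i,X_j))$ for every overlap index $k$. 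For a no-overlap--no-overlap pair $(i,i')$, the crucial triangle-inequality argument from Figure~\ref{fig:intuition} applies: if $\hat t_i$ is attained at overlap index $k$, then $\hat t_{i'} \geq \hat\mu(X_k)-Ld(X_{i'},X_k) \geq \hat t_i - L\,d(X_i,X_{i'})$, and the $\ell$-case is immediate. Boundedness $\ell\le\hat t_i\le u$ uses $\hat\mu\in[\ell,u]$ on the overlap (which may be enforced without loss since $\mu_{P_0}\in[\ell,u]$). Optimality is then trivial: the LP constraints imply every feasible $(t_i)$ satisfies $t_i\ge \hat t_i$ pointwise on the no-overlap region, and the objective has non-negative coefficients.

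For part (b), I would run the analogous argument at the population level. Feasibility of the candidate minimizer is immediate because $\mu_{P_0}$ itself satisfies the Lipschitz and boundedness constraints by hypothesis, which handles every ``pair'' of points simultaneously. For optimality, note that any $P\in\mathcal{P}_L^{\Lip}\cap\mathcal{P}_{\ell,u}^{\bdd}$ must agree with $P_0$ on the overlap (by the consistency requirement in the definition of $\mathcal{P}$), so $\mu_P(x)=\mu_{P_0}(x)$ there. The Lipschitz and boundedness constraints then force
\[
\mu_P(x)\ \geq\ \ell \vee \sup_{x':\pi_b(x')>0}\bigl(\mu_{P_0}(x')-Ld(x,x')\bigr)
\]
for every $x$ in the no-overlap region, and this lower envelope is itself $L$-Lipschitz and bounded in $[\ell,u]$, so it is attained by a distribution consistent with $P_0$; taking expectation against $\pi_e(X_i)\mathbf{1}\{\pi_b(X_i)=0\}$ yields \eqref{eq:manski_sup}.

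For part (c), I would decompose $\hat\psi_2^- - \psi_2^-$ into three pieces: (i)~replacing $\hat\mu$ by $\mu_{P_0}$ inside the max, (ii)~replacing the max over the observed overlap points by the supremum over the full overlap region, and (iii)~replacing the empirical average by its expectation. Piece (i) is $o_p(1)$ uniformly in $i$ by Assumption~\ref{ass:consistency}, since $|\ell\vee \max_j(\hat\mu(X_j)-Ld(X_i,X_j))-\ell\vee\max_j(\mu(X_j)-Ld(X_i,X_j))|\le \sup_{x:\pi_b(x)>0}|\hat\mu(x)-\mu(x)|$. Piece (iii) is the law of large numbers applied to a bounded integrand. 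Piece (ii), the main obstacle, requires showing that the sample supremum converges to the population supremum, which I would handle via Assumption~\ref{ass:density}: for each $\epsilon>0$ and any $x^\star$ within $\epsilon$ of attaining $\sup_{x:\pi_b(x)>0}(\mu(x)-Ld(X_i,x))$, Assumption~\ref{ass:density} guarantees that with probability tending to one some sample point $X_j$ satisfies $\pi_b(X_j)>0$ and $d(x^\star,X_j)\le \epsilon$, and then the Lipschitz continuity of $x\mapsto \mu(x)-Ld(X_i,x)$ (with constant $2L$) shows the sample max is within $O(\epsilon)$ of the sup. A covering/union-bound argument over a finite $\epsilon$-net of the overlap region makes this uniform; combined with bounded convergence, piece (ii) is $o_p(1)$, completing the proof.
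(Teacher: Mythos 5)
Your proposal is correct and follows essentially the same route as the paper's proof: part (a) via the same explicit candidate plus a pointwise lower bound justified by the triangle-inequality (no-interaction) argument, part (b) via the lower envelope attained by a constructed consistent distribution, and part (c) via the same three-way oracle decomposition handled by Assumption~\ref{ass:consistency}, a nearest-neighbor argument under Assumption~\ref{ass:density}, and the law of large numbers. The only notable (and welcome) refinement is your one-line use of the $1$-Lipschitzness of $a\mapsto \ell\vee a$ to absorb the boundedness floor, which replaces the paper's more laborious case analysis over whether each maximum falls below $\ell$.
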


Theorem~\ref{thm:mu_hat_lip} from the main text follows from sending $\ell \to -\infty$ in Theorem~\ref{thm:mu_hat_lip_gen}. We now prove each part of the theorem in turn. In our proofs, we assume for the sake of convenience that the supremum in \eqref{eq:manski_sup} is attained. This will be the case if, for instance, $\{x: \pi_b(x)>0\}$ is compact. If the supremum is not attained, slight modifications of our proofs can be used to obtain the result.

\subsubsection{Proof of Theorem~\ref{thm:mu_hat_lip_gen}(a)}

\begin{proof}[Proof of Theorem~\ref{thm:mu_hat_lip_gen}(a)]
	To show the result, we must do two things: show that the objective of \eqref{eq:estim_lp} is bounded below by $\hat{\psi}_2^-$, and show that this bound is attained.

For the bound, observe that, for each $t_i$, we have 
\begin{align}
	t_i&\geq t_j-Ld(X_i, X_j)\text{ for }1\leq j\leq n,\\
	&\geq \max_{j:\pi_b(X_j)>0} t_j-Ld(X_i, X_j),\\
	&\geq \max_{j: \pi_b(X_j)>0} t_j-Ld(X_i, X_j),\\
	&= \max_{j: \pi_b(X_j)>0}\hat{\mu}(X_j)-Ld(X_i, X_j),
	\label{<+label+>}
\end{align}
where the first bound is the Lipschitz constraint and the last equality is the equality constraint. We also have $t_i\geq \ell$ for all $i$, and thus we must have
\begin{equation}
	\sum_{i=1}^n t_i\pi_e(X_i)\mathbf{1}\{\pi_b(X_i)=0\}\geq \sum_{i=1}^n \pi_e(X_i)\left(\ell \vee \max_{j: \pi_b(X_j)>0} \hat{\mu}(X_j)-Ld(X_i, X_j)  \right)\mathbf{1}\{\pi_b(X_i)=0\}=\hat{\psi}_2^-,
\end{equation}
as desired.

Next we will show that that there exist a set of feasible $t_i$ for which the objective of \eqref{eq:estim_lp} is equal to $\hat{\psi}_2^-$. The construction is 
\begin{equation}
	t_i^*= \ell\vee \max_{j:\pi_b(X_j)>0} \hat{\mu}(X_j)-Ld(X_i, X_j)
	\label{eq:t_star}
\end{equation}
This construction clearly has objective value $\hat{\psi}_2^-$, so we need only check that it is feasible. 

For the equality constraint, note that if $\pi_b(X_i)>0$, then the maximum in \eqref{eq:t_star} includes $j=i$, and thus is at least $\hat{\mu}(X_i)$. But for any $j\neq i$, we have $|\hat{\mu}(X_j)-\hat{\mu}(X_i)|\leq Ld(X_i,X_j)$ by the feasibility of \eqref{eq:estim_lp}, and so $\hat{\mu}(X_i)\geq \hat{\mu}(X_j)-Ld(X_i, X_j)$. Since $\hat{\mu}$ is range-bounded, we also have $\hat{\mu}(X_i)\geq \ell$, and so 
\begin{equation}
	\ell\vee \max_{j:\pi_b(X_j)>0} \hat{\mu}(X_j)-Ld(X_i, X_j)=\hat{\mu}(X_i)
\end{equation}
for $i$ in the overlap region, and the equality constraint is satisfied.

Next we check the Lipschitz constraint. To do this, we define 
\begin{equation}
	f(i)=\argmin_{j: \pi_b(X_j)>0} \hat{\mu}(X_j)-Ld(X_i, X_j)
\end{equation}
so that $t_i^*=\ell\vee (\hat{\mu}(X_{f(i)})-Ld(X_i, X_{f(i)}))$. We consider any two $t_i^*, t_j^*$, and distinguish three cases. 

First, if $\hat{\mu}(X_{f(i)})-Ld(X_i, X_{f(i)})<\ell$ and $\hat{\mu}(X_{f(j)})-Ld(X_j, X_{f(j)})<\ell$, then $|t_i^*-t_j^*|=0$ and the Lipschitz condition is satisfied.

Second, if $\hat{\mu}(X_{f(i)})-Ld(X_i, X_{f(i)})>\ell$ and $\hat{\mu}(X_{f(j)})-Ld(X_j, X_{f(j)})>\ell$, we may assume without the loss of generality that $t_i^*>t_j^*$, and compute
\begin{align}
	|t_i^*-t_j^*|&= t_i^*-t_j^*,\\
&=\left(\hat{\mu}(X_{f(i)})-Ld(X_i, X_{f(i)})\right)-\left(\hat{\mu}(X_{f(j)})-Ld(X_j, X_{f(j)})\right),\\
&\leq \left(\hat{\mu}(X_{f(i)})-Ld(X_i, X_{f(i)})\right)-\left(\hat{\mu}(X_{f(i)})-Ld(X_j, X_{f(i)})\right),\\
&= L\left( d(X_j, X_{f(i)})-d(X_i, X_{f(i)}) \right),\\
	&\leq Ld(X_i, X_j),
	\label{eq:no_interaction}
\end{align}
where the first inequality uses the maximality (by definition) of $f(j)$, and the second inequality is the triangle inequality $d(X_j, X_{f(i)})\leq d(X_i, X_{f(i)})+d(X_i, X_j)$. Thus the Lipschitz constraint is satisfied in this case. 

Finally, if $\hat{\mu}(X_{f(i)})-Ld(X_i, X_{f(i)})>\ell$ and $\hat{\mu}(X_{f(j)})-Ld(X_j, X_{f(j)})<\ell$, then 
\begin{align}
	|t_i^*-t_j^*|&= \hat{\mu}(X_{f(i)})-Ld(X_i,X_{f(i)})-(\ell),\\
	&\leq \hat{\mu}(X_{f(i)})-Ld(X_i,X_{f(i)})-(\hat{\mu}(X_{f(j)})-Ld(X_j, X_{f(j)}))\label{eq:manski_lip}
\end{align}
and \eqref{eq:manski_lip} is bounded by $Ld(X_i, X_j)$ by the arguments of \eqref{eq:no_interaction}.

Thus we see that the $t_i^*$ defined in \eqref{eq:t_star} are feasible, and conclude that the value of \eqref{eq:estim_lp} is $\hat{\psi}_2^-$.
\end{proof}

\subsubsection{Proof of Theorem~\ref{thm:mu_hat_lip_gen}(b)}

The proof of Theorem~\ref{thm:mu_hat_lip_gen}(b) is essentially a continuous version of the proof of Theorem~\ref{thm:mu_hat_lip_gen}(a).

\begin{proof}[Proof of Theorem~\ref{thm:mu_hat_lip_gen}(b)]
	We would like to show that 
	\begin{equation}
		\inf_{P\in \mathcal{P}_L^{\Lip}} \E_P\left[ Y_i\pi_e(X_i)\mathbf{1}\{\pi_b(X_i)=0\} \right]=\E_{P_0}\left[ \pi_e(X_i)\left( \ell\vee \sup_{x:\pi_b(x)>0} \mu_{P_0}(x)-Ld(X_i,x) \right)\mathbf{1}\{\pi_b(X_i)=0\} \right].
		\label{eq:wts}
	\end{equation}
	To do this, we must first show that each for each $P\in\mathcal{P}_L^{\Lip}\cap \mathcal{P}_{\ell, u}^{\bdd}$, $\psi_2(P)$ is greater than $\psi_2^{-}$, and then show that $\psi_2^{-}$ is attained.

	For the lower bound, note that for any $X_i$ and $x$ such that $\pi_b(X_i)=0$ and $\pi_b(x)>0$, we must have $\mu_P(X_i)\geq \mu_P(x)-Ld(X_i, x)$ since $\mu_P$ is $L$-Lipschitz. Furthermore, since $P$ is consistent with $P_0$ for $x$ such that $\pi_b(x)>0$, we in fact have $\mu_P(X_i)\geq \mu_{P_0}(x)-Ld(X_i,x)$ for all such $x$. Lastly, since $\mu_P$ is bounded, we also have $\mu_P(X_i)\geq \ell$. Thus, 
	\begin{align}
		\E_P[Y_i\pi_e(X_i)\mathbf{1}\{\pi_b(X_i)=0\}]&= \E_P[\pi_e(X_i)\mu_P(X_i)\mathbf{1}\{\pi_b(X_i)=0\}],\\
		&\geq \E_P\left[ \pi_e(X_i)\left( \ell\vee \sup_{x:\pi_b(x)>0} \mu_{P_0}(x)-Ld(X_i, x) \right)\mathbf{1}\{\pi_b(X_i)=0\}\right],\\
		&= \E_{P_0}\left[ \pi_e(X_i)\left( \ell\vee \sup_{x:\pi_b(x)>0} \mu_{P_0}(x)-Ld(X_i, x) \right)\mathbf{1}\{\pi_b(X_i)=0\}\right],\label{eq:lb}
	\end{align}
	where the first equality is the tower rule, the second inequality follows from the arguments of the preceeding paragraph, and the third equality uses the fact that $P$ is consistent with $P_0$. Since \eqref{eq:lb} is exactly $\psi_2^{-}$, this shows the lower bound. 

	To show that this bound is attained, let $P^*$ be a distribution that has the same marginal $X_i$ distribution as $P_0$, the same conditional distribution of $Y_i\mid X_i=x$ for $x$ such that $\pi_b(x)>0$, and has 
	\begin{equation}
		\mu_{P^*}(x)=\ell\vee \sup_{x':\pi_b(x')>0} \mu_{P_0}(x')-Ld(x, x')
	\label{eq:p_star}
	\end{equation}
	for $x$ such that $\pi_b(x)=0$. Furthermore, we assume that, for $x$ such that $\pi_b(x)=0$ the distribution $Y_i\mid X_i=x$ is a point mass at $\mu_{P^*}(x)$. This ensures that $P^*\in \mathcal{P}^M_{\bdd}$. 
	This $P^*$ clearly attains the bound of $\psi_2^{-}$ so it is sufficient to check that it is consistent with $P_0$ and $L$-Lipschitz to show that $P^*\in \mathcal{P}_L^{\Lip}$. 

	The distribution $P^*$ is consistent with $P_0$ by construction, but we must still check that $\mu_{P^*}$ defined in \eqref{eq:p_star} agrees with $\mu_{P_0}$ for $x$ in the overlap region. To verify this, note that for any $x$ such that $\pi_b(x)>0$, the supremum in \eqref{eq:p_star} is attained at $x'=x$, since $\mu_{P_0}(x)-Ld(x,x)=\mu_{P_0}(x)$ and $\mu_{P_0}(x)\geq \mu_{P_0}(x')-Ld(x,x')$ for any other $x'$ since $\mu_{P_0}$ is $L$-Lipschitz. So the supremum is in fact equal to $\mu_{P_0}(x)$, and $\ell\vee \mu_{P_0}(x)=\mu_{P_0}(x)$, which is consistent with $P_0$. 

	To check that $\mu_{P^*}$ is $L$-Lipschitz, let $f(x)$ be the value attaining the supremum in \eqref{eq:p_star}, so that $\mu_{P^*}(x)=\ell\vee (\mu_{P_0}(f(x))-Ld(x, f(x)))$. Then, consider any pair of points $x_1, x_2$. We distinguish three cases.

	First, if $\mu_{P_0}(f(x_1))-Ld(x_1, f(x_1))<\ell$ and $\mu_{P_0}(f(x_2))-Ld(x_2, f(x_2))<\ell$, we have $\mu_{P^*}(x_1)-\mu_{P^*}(x_2)=0$ and the Lipschitz condition is satisfied.

Second, if $\mu_{P_0}(f(x_1))-Ld(x_1, f(x_1))>\ell$ and $\mu_{P_0}(f(x_2))-Ld(x_2, f(x_2))>\ell$, assume without the loss of generality that $\mu_{P^*}(x_1)>\mu_{P^*}(x_2)$. Then we have
	\begin{align}
		|\mu_{P^*}(x_1)-\mu_{P^*}(x_2)|&= \mu_{P^*}(x_1)-\mu_{P^*}(x_2),\\
		&= \left(\mu_{P_0}(f(x_1))-Ld(x_1, f(x_1))\right)-\left( \mu_{P_0}(f(x_2)-Ld(x_2, f(x_2)) \right),\\
		&\leq \left(\mu_{P_0}(f(x_1))-Ld(x_1, f(x_1))\right)-\left( \mu_{P_0}(f(x_1)-Ld(x_2, f(x_1)) \right),\\
			&= L\left( d(x_1, f(x_1))-d(x_2, f(x_1)) \right),\\
			&\leq Ld(x_1, x_2),
	\end{align}
	where the first inequality uses the fact that $f(x_2)$ attains the supremum and the second uses the triangle inequality.

	Finally, if $\mu_{P_0}(f(x_1))-Ld(x_1, f(x_1))>\ell$ and $\mu_{P_0}(f(x_2))-Ld(x_2, f(x_2))<\ell$, we have 
\begin{align}
		|\mu_{P^*}(x_1)-\mu_{P^*}(x_2)|&= \left(\mu_{P_0}(f(x_1))-Ld(x_1, f(x_1))\right)-\left( \ell \right),\\
		&\leq \left(\mu_{P_0}(f(x_1))-Ld(x_1, f(x_1))\right)-\left( \mu_{P_0}(f(x_2)-Ld(x_2, f(x_2)) \right),\\
			&\leq Ld(x_1, x_2),
	\end{align}
	by the same arguments as above.

	This shows that $\mu_{P^*}$ is $L$-Lipschitz, and so $P^*\in \mathcal{P}_{L}^{\Lip}$. Thus the bound of $\psi_2^{-}$ is attained, completing the proof.
\end{proof}

\subsubsection{Proof of Theorem~\ref{thm:mu_hat_lip_gen}(c)}

Before proceeding to the proof of Theorem~\ref{thm:mu_hat_lip_gen}(c), we present two helpful lemmas. The first of these controls the difference between the suprema of interest in terms of the difference in conditional mean functions, and will essentially be used to replace the estimated $\hat{\mu}$ in $\hat{\psi}_2^-$ with the true $\mu$.

\begin{lemma}
	\label{lem:sup_diff}
	For any conditional mean functions $\mu_1, \mu_2$, we have 
	\begin{equation}
		\left| \max_{j: \pi_b(X_j)>0} \mu_1(X_j)-Ld(X_i, X_j)-\max_{j:\pi_b(X_j)>0} \mu_2(X_j)-Ld(X_i,x)\right|\leq \sup_{x:\pi_b(x)>0} |\mu_1(x)-\mu_2(x)|
	\end{equation}
\end{lemma}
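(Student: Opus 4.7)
The plan is to recognize this as an instance of the general inequality $|\sup f - \sup g| \le \sup |f - g|$, applied to the functions $f_k(j) = \mu_k(X_j) - Ld(X_i, X_j)$ for $k = 1, 2$ over the index set $\{j : \pi_b(X_j) > 0\}$. The crucial observation is that the $-Ld(X_i, X_j)$ terms are identical in both maxima and therefore cancel in the pointwise difference, so that $f_1(X_j) - f_2(X_j) = \mu_1(X_j) - \mu_2(X_j)$ for every such $j$.

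Concretely, I would first let $j_1^*$ attain the first maximum. Then
\begin{align}
\max_{j:\pi_b(X_j)>0} \mu_1(X_j) - Ld(X_i, X_j)
&= \mu_1(X_{j_1^*}) - Ld(X_i, X_{j_1^*}) \\
&\le \mu_2(X_{j_1^*}) - Ld(X_i, X_{j_1^*}) + |\mu_1(X_{j_1^*}) - \mu_2(X_{j_1^*})| \\
&\le \max_{j:\pi_b(X_j)>0}\bigl(\mu_2(X_j) - Ld(X_i, X_j)\bigr) + \sup_{x:\pi_b(x)>0}|\mu_1(x) - \mu_2(x)|,
\end{align}
using that $\pi_b(X_{j_1^*}) > 0$ to bound the pointwise difference by the supremum over the overlap region. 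The symmetric argument, starting from a $j_2^*$ that attains the second maximum, gives the reverse inequality, and combining the two yields the claim.

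I do not anticipate any real obstacle here; the only subtlety is to make sure that the index at which one maximum is attained also lies in the feasible set for the other maximum, which is immediate since both maxima range over the same set $\{j : \pi_b(X_j) > 0\}$. No smoothness or Lipschitz structure is used, so the lemma holds for arbitrary bounded $\mu_1, \mu_2$ on the overlap region.
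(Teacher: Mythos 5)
Your proof is correct and is essentially the same argument as the paper's: the paper establishes the one-sided bound via subadditivity of the maximum (splitting $\mu_1(X_j)-Ld(X_i,X_j)$ as $(\mu_1(X_j)-\mu_2(X_j))+(\mu_2(X_j)-Ld(X_i,X_j))$), whereas you evaluate at the maximizer $j_1^*$, but these are two standard and interchangeable ways of proving $|\max f - \max g|\leq \sup|f-g|$. Both then conclude by symmetry in $\mu_1,\mu_2$.
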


\begin{proof}[Proof of Lemma~\ref{lem:sup_diff}]
	We have 
\begin{align}
		\max_{j:\pi_b(X_j)>0} \mu_1(X_j)-Ld(X_i, X_j)&\leq \max_{j:\pi_b(X_j)>0} \mu_1(X_j)-\mu_2(X_j)+\max_{j:\pi_b(X_j)>0} \mu_2(X_j)-Ld(X_i, X_j),\\
		&\leq \sup_{x: \pi_b(x)>0} |\mu_1(x)-\mu_2(x)|+\max_{j:\pi_b(X_j)>0} \mu_2(X_j)-Ld(X_i, X_j),\\
		\label{eq:j_star_vs_sup}
	\end{align}
	The same bounds hold with the roles of $\mu_1$ and $\mu_2$ reversed, completing the proof.
\end{proof}

The next two lemmas will allow us to replace the maximum in $\hat{\psi}_2^-$ by a supremum by controlling the difference between the maximum and supremum.

\begin{lemma}
	\label{lem:max_sup_bound}
	Let $x^*$ be the point at which $\sup_{x: \pi_b(x)>0} \mu_{P_0}(x)-Ld(X_i, x)$ is attained and let 
\begin{equation}
		{j^*}=\argmin_{j: \pi_b(X_j)>0} d(X_j, x^*)
\end{equation}
be the observed point that is closest to $x^*$. Then 
\begin{equation}
	\left|\max_{j:\pi_b(X_j)>0} \mu_{P_0}(X_j)-Ld(X_i, X_j)-\sup_{x: \pi_b(x)>0} \mu_{P_0}(x)-Ld(X_i, x)\right|\leq 2Ld(x^*, X_{j^*})
\end{equation}
\end{lemma}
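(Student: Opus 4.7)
The plan is to prove the bound by establishing two one-sided inequalities: that the max is never larger than the sup (trivially), and that the max is at least the sup minus $2Ld(x^*, X_{j^*})$. For the first direction, I would note that $\{X_j : \pi_b(X_j) > 0\}$ is a subset of $\{x : \pi_b(x) > 0\}$, so taking the maximum over the former is bounded above by taking the supremum over the latter. This gives one side of the absolute value inequality with room to spare (difference is $\leq 0 \leq 2Ld(x^*, X_{j^*})$).

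For the main direction, the key step is to use $X_{j^*}$ as a witness. Since $j^*$ is in the overlap region by construction, the maximum is at least $\mu_{P_0}(X_{j^*}) - Ld(X_i, X_{j^*})$. I would then bound this witness value from below by (sup value) $- 2Ld(x^*, X_{j^*})$ using two applications of the Lipschitz/triangle inequality: first, $\mu_{P_0}(X_{j^*}) \geq \mu_{P_0}(x^*) - Ld(x^*, X_{j^*})$ by the $L$-Lipschitz property of $\mu_{P_0}$ (which holds since $P_0 \in \mathcal{P}_L^{\Lip}$), and second, $-d(X_i, X_{j^*}) \geq -d(X_i, x^*) - d(x^*, X_{j^*})$ by the triangle inequality. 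Adding these with the factor of $L$ and combining yields
\begin{equation}
\mu_{P_0}(X_{j^*}) - Ld(X_i, X_{j^*}) \geq \mu_{P_0}(x^*) - Ld(X_i, x^*) - 2Ld(x^*, X_{j^*}),
\end{equation}
which is exactly the desired inequality since the right-hand side is $\sup_{x:\pi_b(x)>0}\mu_{P_0}(x) - Ld(X_i, x) - 2Ld(x^*, X_{j^*})$ by the defining property of $x^*$.

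There is no real obstacle here; the main substance is recognizing that one Lipschitz step handles the change in $\mu_{P_0}$ while one triangle-inequality step handles the change in $d(X_i, \cdot)$, and each contributes a term $Ld(x^*, X_{j^*})$ to produce the factor of $2$. Combining the two directions gives the absolute value bound. (A brief note: if the supremum is not attained, the argument goes through with $x^*$ replaced by a near-maximizer and an extra $\epsilon$ slack that can be sent to zero, following the convention adopted earlier in the appendix.)
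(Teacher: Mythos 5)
Your proposal is correct and follows essentially the same argument as the paper: lower-bound the maximum by the witness term at $X_{j^*}$, apply the Lipschitz property of $\mu_{P_0}$ and the triangle inequality to pick up the two $Ld(x^*, X_{j^*})$ terms, and note that the supremum trivially dominates the maximum for the other direction. No gaps.
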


\begin{proof}
We have 
\begin{align}
\mu_{P_0}(X_{j^*})-Ld(X_i, X_{j^*}) &\geq \mu_{P_0}(x^*)-Ld(X_{j^*}, x^*)-Ld(X_i, X_{j^*}),\\
	&\geq\mu_{P_0}(x^*)-Ld(X_i, x^*)-2Ld(x^*, X_{j^*}),\\
	&= \sup_{x: \pi_b(x)>0} \mu_{P_0}(x)-Ld(X_i, x)-2Ld(x^*, X_{j^*}),
\end{align}
where the first inequality uses the fact that $\mu_{P_0}$ is $L$-Lipschitz and the second is the triangle inequality. Since the supremum is greater than the maximium, the result follows.
\end{proof}

\begin{lemma}
	\label{lem:nn_convergence}
	Fix a point $x$ such that $\pi_b(x)>0$ and let 
\begin{equation}
	{j^*}=\argmin_{j: \pi_b(X_j)>0} d(X_j, x)
\end{equation}
be the index of the closest observation to $x$ in the overlap region. Then, under either of Assumption~\ref{ass:density}(a) or Assumption~\ref{ass:density}(b), we have $\P(d(x, X_{j^*})>\epsilon)\xrightarrow{n\to \infty} 0$ for any $\epsilon>0$.
\end{lemma}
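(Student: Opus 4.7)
The plan is to split into the two cases of Assumption~\ref{ass:density} and in each case reduce the problem to a Bernoulli tail bound that decays geometrically in $n$. Concretely, since $X_{j^*}$ is the minimizer of $d(\cdot, x)$ over observations with $\pi_b(X_j) > 0$, it suffices to show that with probability tending to $1$, at least one of the $n$ i.i.d.~samples is simultaneously within $\epsilon$ of $x$ and in the overlap region; such a sample is a candidate in the minimization, so $d(X_{j^*}, x) \le \epsilon$.

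In Case~(a), let $p := P_{0,X}(\{x\}) > 0$. Because the lemma's hypothesis gives $\pi_b(x) > 0$, any realization $X_i = x$ automatically satisfies $\pi_b(X_i) > 0$ and hence lies in the overlap region. Thus
\[
	\P(d(X_{j^*}, x) > \epsilon) \;\le\; \P(X_i \ne x \text{ for all } i) \;=\; (1 - p)^n \;\longrightarrow\; 0.
\]
In Case~(b), fix $\epsilon > 0$ and apply Assumption~\ref{ass:density}(b) at $x$ to obtain $\delta > 0$ with $\P(d(X_i, x) \le \epsilon) > \delta$. The goal is to upgrade this to a positive lower bound on
\[
	q := \P\bigl(d(X_i, x) \le \epsilon,\ \pi_b(X_i) > 0\bigr).
\]
Once such a bound $q \ge \delta' > 0$ is in hand, the same binomial argument yields $\P(d(X_{j^*}, x) > \epsilon) \le (1-\delta')^n \to 0$, finishing the proof.

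The main obstacle is producing $\delta'$ in Case~(b): Assumption~\ref{ass:density}(b) lower bounds the marginal probability of landing near $x$, but the lemma concerns the nearest neighbor \emph{within the overlap region}. The cleanest resolution is to observe that since $\pi_b(x) > 0$ and $\pi_b$ is naturally taken to be lower semi-continuous (so that $\{\pi_b > 0\}$ is open), for all sufficiently small $\epsilon$ the ball $\{x' : d(x, x') \le \epsilon\}$ is contained in the overlap region, in which case $q = \P(d(X_i, x) \le \epsilon) > \delta$ and we may take $\delta' = \delta$; the same conclusion for larger $\epsilon$ then follows by monotonicity. Alternatively, one reads Assumption~\ref{ass:density}(b) as applying to the marginal restricted to the overlap region, which is how it is effectively used throughout the paper. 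Either way, the bound on $q$ is immediate and the remainder of the argument is a routine geometric-tail calculation.
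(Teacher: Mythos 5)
Your proof is correct and follows the same geometric-tail argument as the paper: in each of the two cases, bound $\P(d(x,X_{j^*})>\epsilon)$ by the probability that none of the $n$ i.i.d.\ samples lands within $\epsilon$ of $x$ (and, in case (a), equals $x$), giving $(1-p)^n$ or $(1-\delta)^n$, which vanish. The one place you go beyond the paper is in flagging that Assumption~\ref{ass:density}(b) bounds the unrestricted marginal while $j^*$ ranges only over overlap points; the paper's own proof silently writes $\P(d(x,X_{j^*})>\epsilon)=\P(d(x,X_i)>\epsilon\text{ for all }i)$ without addressing this, so your reading of the assumption as applying to the overlap-restricted marginal (or your lower-semicontinuity fix) is the more careful treatment of a real, if minor, imprecision in the stated lemma.
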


\begin{proof}
We analyze $\P(d(x, X_{j^*})>\epsilon)$ under each of the two possibilities in Assumption~\ref{ass:density}. If Assumption~\ref{ass:density}(a) holds and there is an atom at $x$ so that $\P(X_i=x)=p$ for some $p>0$, we have 
\begin{align}
	\P(d(x, X_{j^*})>\epsilon)&\leq \P(X_i\neq x \text{ for }1\leq i\leq n),\\
	&=(1-p)^n,
\end{align}
which goes to 0 as $n\to \infty$. Next, if Assumption~\ref{ass:density}(b) holds, let $\delta$ be such that $\P(d(x, X_{j^*})\leq \epsilon)>\delta$. Then 
\begin{align}
	\P(d(x, X_{j^*})>\epsilon)&=\P(d(x, X_i)>\epsilon \text{ for }1\leq i\leq n),\\
	&\leq (1-\delta)^n
\end{align}
which again goes to 0 as $n\to \infty$. 
\end{proof}

With these lemmas in hand, we begin the main proof.

\begin{proof}[Proof of Theorem~\ref{thm:mu_hat_lip_gen}(b)]
	We begin by defining 
	\begin{equation}
		\hat{\psi}_2^{-,\text{oracle}}=\frac{1}{n}\sum_{i=1}^n \pi_e(X_i)\left( \ell\vee \sup_{x: \pi_b(x)>0} {\mu}_{P_0}(x)-Ld(X_i, x) \right)\mathbf{1}\{\pi_b(X_i)=0\}.
	\end{equation}
	The main idea is to show that 
	\begin{equation}
		|\hat{\psi}_2^- - \hat{\psi}_2^{-,\text{oracle}}|=o_P(1),
		\label{eq:oracle_equivalence}
	\end{equation}
	and then since $\hat{\psi}_2^{-,\text{oracle}}$ is a sum of i.i.d.~terms, the desired result follows by the law of large numbers. 

	To show \eqref{eq:oracle_equivalence}, we show that for any fixed $i$,
	\begin{equation}
		\left|\ell\vee \max_{j:\pi_b(X_j)>0}\hat{\mu}(X_j)-Ld(X_i, X_j)-\left(\ell\vee \sup_{x: \pi_b(x)>0} \mu_{P_0}(x)-Ld(X_i, x)\right)\right|=o_P(1).
		\label{eq:max_hat_vs_sup_mu}
	\end{equation}
	Indeed, we have by the triangle inequality that
	\begin{align}
		&\left|\ell\vee \max_{j:\pi_b(X_j)>0} \hat{\mu}(X_j)-Ld(X_i, X_j)-\left(\ell\vee \sup_{x: \pi_b(x)>0} \mu_{P_0}(x)-Ld(X_i, x)\right)\right|\\
		\leq &\left|\ell\vee \max_{j:\pi_b(X_j)>0} \hat{\mu}(X_j)-Ld(X_i, X_j)- \left(\ell\vee \max_{j:\pi_b(X_j)>0} \mu_{P_0}(X_j)-Ld(X_i, X_j)\right)\right|\label{eq:max_hat_vs_max_mu}\\
		&+\left|\ell\vee \max_{j:\pi_b(X_j)>0} {\mu}_{P_0}(X_j)-Ld(X_i, X_j)-\left(\ell \vee \sup_{x: \pi_b(x)>0} \mu_{P_0}(x)-Ld(X_i, x)\right)\right|, \label{eq:max_mu_vs_sup_mu}
	\end{align}
	and now we analyze \eqref{eq:max_hat_vs_max_mu} and \eqref{eq:max_mu_vs_sup_mu} separately. In both cases, we will show that if one of the terms is smaller than $\ell$, the other must be as well with high probability, and then work on the event that they are both smaller than $\ell$. 

	For \eqref{eq:max_hat_vs_max_mu}, we first consider the event
	\begin{equation}
		E=\left\{ \max_j \hat{\mu}(X_j)-Ld(X_i,X_j)>\ell, \max_j \mu_{P_0}(X_j)-Ld(X_i, X_j)<\ell \right\}.
	\end{equation}
	On the event $E$, there must exist some $j$ and $\epsilon>0$ such that $\hat{\mu}(X_j)-Ld(X_i,X_j)>\ell+\epsilon$. Then,
	\begin{align}
		\P(E)&\leq \P(\hat{\mu}(X_j)-Ld(X_i, X_j)>\ell+\epsilon, \mu_{P_0}(X_j)-Ld(X_i, X_j)<\ell),\\
		&\leq \P\left(\sup_{x:\pi_b(x)>0} | \hat{\mu}(x)-\mu_{P_0}(x)|>\epsilon\right),\\
		&= o(1),
	\end{align}
	by Assumption~\ref{ass:consistency}. 

	Since $\P(E)=o(1)$, it is sufficient to work on the event $E^c$. On this event, there are two possibilities. Either
	\begin{equation}
		\mu_{P_0}(X_j)-Ld(X_i, X_j)<\ell\quad\text{ and }\quad \sup_{x: \pi_b(x)>0} \mu_{P_0}(x)-Ld(X_i, x)<\ell,
		\label{eq:both_small}
	\end{equation}
	or
\begin{equation}
	\mu_{P_0}(X_j)-Ld(X_i, X_j)>\ell\quad\text{ and }\quad \sup_{x: \pi_b(x)>0} \mu_{P_0}(x)-Ld(X_i, x)>\ell.
		\label{eq:both_big}
	\end{equation}

	If \eqref{eq:both_small} holds, then \eqref{eq:max_hat_vs_max_mu} is 0. On the other hand, if \eqref{eq:both_big} holds, then \eqref{eq:max_hat_vs_max_mu} is $o_P(1)$ by Lemma~\ref{lem:sup_diff} and Assumption~\ref{ass:consistency}. Thus we conclude that \eqref{eq:max_hat_vs_max_mu} is $o_P(1)$ on the event $E^c$.

	For \eqref{eq:max_mu_vs_sup_mu}, we distinguish cases. For the first case, if 
	\begin{equation}
		\sup_{x:\pi_b(x)>0} \mu_{P_0}(x)-Ld(X_i, x)\leq \ell,
	\end{equation}
	then we must have
	\begin{equation}
		\max_{j:\pi_b(X_j)>0} \mu_{P_0}(X_j)-Ld(X_i, X_j)\leq \ell,
	\end{equation}
	as well, and so \eqref{eq:max_mu_vs_sup_mu} is 0. 

	Thus it suffices to consider the case where
	\begin{equation}
		\sup_{x:\pi_b(x)>0} \mu_{P_0}(x)-Ld(X_i, x)>\ell.
		\label{eq:sup_greater_ell}
	\end{equation}
	In this case, suppose that the supremum is attained at $x^*$, and that $\mu_{P_0}(x^*)-Ld(X_i,x^*)-\ell=\epsilon$ for some $\epsilon>0$, and let 
	\begin{equation}
		{j^*}=\argmin_{j: \pi_b(X_j)>0} d(X_j, x^*)
	\end{equation}
	be the index of the observed data point that is closest to $x^*$. 
	
	Then, consider the event 
	\begin{equation}
		E=\left\{ \max_{j:\pi_b(X_j)>0} \mu_{P_0}(X_j)-Ld(X_i,X_j)<\ell  \right\}.
	\end{equation}
Since we are working in the case where \eqref{eq:sup_greater_ell} holds, by Lemma~\ref{lem:max_sup_bound}, if the event $E$ occurs as well, we must have $2L d(x^*, X_{j^*})>\epsilon$. Thus $\P(E)\leq \P(2Ld(x^*, X_{j^*})>\epsilon)$, and $\P(2Ld(x^*, X_{j^*})>\epsilon)=o(1)$ by Lemma~\ref{lem:nn_convergence}, so we see that $\P(E)=o(1)$. 

Thus it suffices to work on the event $E^c$. On $E^c$, we have
\begin{equation}
	\max_{j:\pi_b(X_j)>0} \mu_{P_0}(X_j)-Ld(X_i, X_j) \geq \sup_{x: \pi_b(x)>0} \mu_{P_0}(x)-Ld(X_i, x)-2Ld(x^*, X_{j^*}),
\end{equation}
by Lemma~\ref{lem:max_sup_bound}. Applying Lemma~\ref{lem:nn_convergence} again, we see that \eqref{eq:max_mu_vs_sup_mu} is $o_P(1)$.

Since \eqref{eq:max_hat_vs_max_mu} and \eqref{eq:max_mu_vs_sup_mu} are both $o_P(1)$, we conclude that \eqref{eq:max_hat_vs_sup_mu} holds, implying \eqref{eq:oracle_equivalence} and finishing the proof.
\end{proof}

\subsection{Proof of Theorem 3}

\begin{proof}
	We follow the approach of the proof of Theorem~\ref{thm:mu_hat_lip_gen}(b), and begin by defining 
\begin{equation}
	\hat{\psi}_2^{-,\text{oracle}}=\frac{1}{n}\sum_{i=1}^n \pi_e(X_i)\left(\sup_{x: \pi_b(x)>0} {\mu}_{P_0}(x)-Ld(X_i, x) \right)\mathbf{1}\{\pi_b(X_i)=0\}.
	\end{equation}
We then decompose 
\begin{align}
	\E[(\hat{\psi}_2^- - \psi_2^{-})^2]&= \E[(\hat{\psi}_2^--\hat{\psi}_2^{-,\text{oracle}}+\hat{\psi}_2^{-,\text{oracle}}-\psi_2^{-})^2],\\
	&\leq 2\E[(\hat{\psi}_2^--\hat{\psi}_2^{-,\text{oracle}})^2]+2\E[(\hat{\psi}_2^{-,\text{oracle}}-\hat{\psi}_2^{-})^2].
	\label{eq:oracle_split}
\end{align}
The second term of \eqref{eq:oracle_split} is the variance of an i.i.d.~sum, and is thus equal to $C/n$ for some constant $C$. The remainder of the proof focuses on the first term of \eqref{eq:oracle_split}.

The first term of \eqref{eq:oracle_split} is
\begin{align}
	&\E\left[ \left( \frac{1}{n}\sum_{i=1}^n \pi_e(X_i)\left( \max_{j>\pi_b(X_j)>0} \hat{\mu}(X_j)-Ld(X_i, X_j)-\sup_{x: \pi_b(x)>0} \mu_{P_0}(x)-Ld(X_i, x)\right)\mathbf{1}\{\pi_b(X_i)=0\}  \right)^2 \right],\\
	&\leq \frac{1}{n}\sum_{i=1}^n \E\left[ \left( \max_{j>\pi_b(X_j)>0} \hat{\mu}(X_j)-Ld(X_i, X_j)-\sup_{x: \pi_b(x)>0} \mu_{P_0}(x)-Ld(X_i, x)\ \right)^2 \right],\label{eq:cs_sum}
\end{align}
by the Cauchy-Schwarz inequality and the fact that $0\leq \pi_e, \mathbf{1}\{\pi_b(X_i)=0\}\leq 1$. A single term of \eqref{eq:cs_sum} is
\begin{align}
	\leq &2\E\left[ \left( \max_{j: \pi_b(X_j)>0} \hat{\mu}(X_j)-Ld(X_i, X_j)  - \max_{j: \pi_b(X_j)>0} \mu_{P_0}(X_j)-Ld(X_i, X_j) \right)^2 \right]\\ &+2\E\left[ \left( \max_{j:\pi_b(X_j)>0} \mu_{P_0}(X_j)-Ld(X_i, X_j)- \sup_{x: \pi_b(x)>0} \mu_{P_0}(x)-Ld(X_i, x)  \right)^2 \right],\\
	\leq&2\E[\|(\hat{\mu}(x)-\mu(x))\mathbf{1}\{\pi_b(x)>0\}\|_{\infty}^2]+2\E[L^2d(x^*, X_{j^*})^2],\\
	\leq&2\E[\|(\hat{\mu}(x)-\mu(x))\mathbf{1}\{\pi_b(x)>0\}\|_{\infty}^2]+4L^2(c_dbn)^{-2/d}.
\end{align}
where the second inequality uses Lemmas~\ref{lem:sup_diff} and \ref{lem:max_sup_bound} (here $x^*$ and $X_{j^*}$ are as defined in Lemma~\ref{lem:max_sup_bound}), and the third inequality uses Lemma~\ref{lem:expected_dist} (proven in Section~\ref{subsec:lemmas}).

\end{proof}

\subsection{Proof of Theorem 4}

\begin{proof}
	Our proof relies on LeCam's two point method \citep{lecam1973, wainwright2019}, which states that if we can find distributions $P_1$ and $P_2$ in $\mathcal{P}_L^{\Lip}$ with $|\psi_2^{-}(P_1)-\psi_2^{-}(P_2)|\geq 2\delta$, then 
\begin{equation}
	\inf_{\hat{\psi}_2^-} \sup_{P\in \mathcal{P}_L^{\Lip}} \E_P\left[ (\hat{\psi}-\psi_2^{-}(P))^2 \right]\geq \frac{\delta^2}{2}\left( 1-\|P_1^n-P_2^n\|_{\text{TV}} \right),
\end{equation}
where by a slight abuse of notation we write $P_1^n$ for the joint distribution of $(X_1, A_1, A_1Y_1),\cdots, (X_n, A_n, A_nY_n)$ when $(X_i, Y_i)$ are drawn from $P_1$, and similarly for $P_2$. We construct $P_1$ and $P_2$ as follows: 
\begin{enumerate}[(i)]
	\item under $P_1$, the marginal distribution of $X_i$ is uniform on $[-1,1]^d$, and $Y_i\mid X_i=x$ is $N(\mu_1(x),1)$ where $\mu_1(x)=0$ identically 
	\item under $P_2$, the marginal distribution of $X_i$ is uniform on $[-1,1]^d$, and the distribution of $Y_i\mid X_i=x$ is $N(\mu_2(x), 1)$ where 
		\begin{equation}
			\mu_2(x)=\left\{\begin{aligned} &0&&\text{ if }&&&x\in [-1/2+L\epsilon, 1/2-L\epsilon]^d\\ &\tilde{\mu}_2(x) &&\text{ if }&&&x\in [-1/2-L\epsilon, 1/2+L\epsilon]^d\setminus [-1/2+L\epsilon, 1/2-L\epsilon]^d,\\ &0&&\text{ if }&&& [-1,1]^d\setminus [-1/2-L\epsilon, 1/2+L\epsilon]^d\end{aligned}\right.,
		\end{equation}
		where $\epsilon<1/(2L)$ is arbitrary and $\tilde{\mu}_2(x)$ is a function that is equal to $\epsilon$ for $x$ on the boundary of the set $[-1/2, 1/2]^d$, and is equal to 0 on the boundaries of $[-1/2-\epsilon, 1/2+\epsilon]^d$ and $[-1/2+\epsilon, 1/2-\epsilon]^d$, and linearly interpolates between these values. 
\end{enumerate}
Essentially, the distribution of $Y_i\mid X_i$ under $P_2$ has a bump of size $\epsilon$ at the boundary of the support of $\pi_b$ that decays to zero as fast as the Lipschitz assumption will allow. 

We now proceed in two steps: first, we verify the separation condition of LeCam's lemma; second, we upper bound the distance between $P_1^n$ and $P_2^n$. 

\paragraph{Checking the separation condition.} For the separation condition, we begin by noting that 

\begin{align}
	\psi_2^{-}(P_1)&= \E\left[ \pi_e(X_i)\left( \sup_{x: \pi_b(x)>0} \mu_1(x)-Ld(X_i, x) \right)\mathbf{1}\{\pi_b(X_i)=0\} \right],\\
	&= \E\left[ \pi_e(X_i)\left( \sup_{x: \pi_b(x)>0} -Ld(X_i, x) \right)\mathbf{1}\{\pi_b(X_i)=0\} \right]\label{eq:first_point_parameter}
\end{align}
Suppose that for each $X_i$, $\sup_{x:\pi_b(x)>0} -Ld(X_i, x)$ is attained at a point $f(X_i)$ and note that $f(X_i)$ must lie on the boundary of the cube $[-1/2, 1/2]^d$ (since it is the projection of $X_i$ onto the support of $\pi_b$). Then,
\begin{align}
	\psi_2^{-}(P_2)&= \E\left[ \pi_e(X_i)\left( \sup_{x: \pi_b(x)>0} \mu_2(x)-Ld(X_i, x) \right)\mathbf{1}\{\pi_b(X_i)=0\} \right],\\
	&\geq \E\left[ \pi_e(X_i)\left( \mu_2(f(X_i))-Ld(X_i, f(X_i)) \right)\mathbf{1}\{\pi_b(X_i)=0\} \right],\\
	&= \epsilon \E[\pi_e(X_i)\mathbf{1}\{\pi_b(X_i)=0\}]+\E\left[ \pi_e(x)(-Ld(X_i, f(X_i)))\mathbf{1}\{\pi_b(X_i)=0\} \right],\\
	&= \epsilon\E[\pi_e(X_i)\mathbf{1}\{\pi_b(X_i)=0\}]+\psi_2^{-}(P_1)\label{eq:separation},
\end{align}
where the first inequality bounds the supremum by a particular point, the second equality uses the fact that $\mu_2$ is equal to $\epsilon$ on the boundary of $[-1/2, 1/2]^d$, and the third uses the fact that $f$ attains the supremum in \eqref{eq:first_point_parameter}. Then, it follows from \eqref{eq:separation} that the separation condition is satisfied with $\delta=\epsilon\E[\pi_e(X_i)\mathbf{1}\{\pi_b(X_i)=0\}]/2$.

\paragraph{Bounding the total variation distance.} It remains to bound $\|P_1^n-P_2^n\|_{\text{TV}}$. By Pinsker's inequality, we have 
\begin{align}
	\|P_1^n-P_2^n\|_{\text{TV}}&\leq \sqrt{\frac{1}{2}D_{\text{KL}}(P_1^n\|P_2^n)},\\
	&\leq \sqrt{\frac{n}{2} D_{\text{KL}}(P_1\|P_2)},\\
\end{align}
Now, letting $p_1$ and $p_2$ be the densities of $P_1$ and $P_2$ respectively, we have
\begin{equation}
	D_{\text{KL}}(P_1\|P_2)= \int_{[-1,1]^d}\int_{-\infty}^{\infty} \sum_{a\in \{0,1\}}p_1(x,a,y)\log \frac{p_1(x,a,y)}{p_2(x,a,y)}\, dy\,dx.
	\label{eq:kl}
\end{equation}
Note that 
\begin{align}
	\sum_{a\in \{0,1\}} p_1(x,a,y)\log \frac{p_1(x,a,y)}{p_2(x,a,y)}=& p_1(x)(1-\pi_b(x))p_1(y\mid x)\log\frac{p_1(x)(1-\pi_b(x))p_1(y\mid x)}{p_2(x)(1-\pi_b(x))p_2(y\mid x)}\\
	&+p_1(x)\pi_b(x)p_1(y\mid x)\log \frac{p_1(x)\pi_b(x)p_1(y\mid x)}{p_2(x)\pi_b(x)p_2(y\mid x)},\\
	=&\frac{1}{2^d}p_1(y\mid x)\log \frac{p_1(y\mid x)}{p_2(y\mid x)},
\end{align}
since $p_1(x)=p_2(x)=1/2^d$ for all $x$. Thus, \eqref{eq:kl} is 
\begin{align}
	&=\int_{[-1,1]^d}\int_{-\infty}^{\infty} 2^{-d}p_1(y\mid x)\log \frac{p_1(y\mid x)}{p_2(y\mid x)}\, dy\,dx\\
	&= \int_{[-1,1]^d} D_{\text{KL}}(N(0, 1)\| N(\mu_2(x), 1))\, dx,\\
	&= \int_{[-1,1]^d} \frac{1}{2}\mu_2(x)^2\, dx,\\
	&= \int_{[-1,1]^d} \tilde{\mu}_2(x)^2\mathbf{1}\{x\in [-1/2-L\epsilon, 1/2+L\epsilon]^d\setminus [-1/2+L\epsilon, 1/2-L\epsilon]^d \}\, dx,\\
	&\leq \epsilon^2 ((1+2L\epsilon)^d-(1-2L\epsilon)^d)\label{eq:kl_bound},\\
	&\leq \epsilon^2(4L\epsilon)^d,
\end{align}
where the last line uses the fact that $(1+x)^d-(1-x)^d\leq (1+x)^d\leq (2x)^d$ for $0<x<1$.

The calculations above show that $\|P_1^n-P_2^n\|_{\text{TV}}^2\leq \frac{n}{2}(4L)^d\epsilon^{d+2}$. If we set $\epsilon=(2n)^{-1/(d+2)}(4L)^{-d/(d+2)}$, this gives the bound $\|P_1^n-P_2^n\|\leq 1/2$. This choice of $\epsilon$ then gives the lower bound 
\begin{align}
	\inf_{\hat{\psi}_2^-} \sup_{P\in \mathcal{P}_L^{\Lip}} \E_P\left[(\hat{\psi}_2^- - \psi_2^{-}(P))^2  \right]&\geq \frac{\delta^2}{4},\\
	&=\frac{1}{16}\E[\pi_e(X_i)\mathbf{1}\{\pi_b(X_i)=0\}]^2\epsilon^2,\\
	&= \frac{1}{16}\E[\pi_e(X_i)\mathbf{1}\{\pi_b(X_i)=0\}]^2(2n)^{-2/(d+2)}(4L)^{-2/(d+2)},\\
\end{align}
as long as the condition that $\epsilon < 1/(2L)$ is satisfied. We can check that this condition holds whenever $n\geq 2^{-d+1}L^2$, completing the proof. 
\end{proof}

\subsection{Proof of Lemma~\ref{lem:expected_dist}}
\label{subsec:lemmas}

In this section, we prove a technical lemma used in the main proofs. 

\begin{lemma}
	\label{lem:expected_dist}
	Let $x^*$ and $X_{j^*}$ be as in Lemma~\ref{lem:max_sup_bound} and let $b$ be a lower bound on the density of $X_i$. Then $\E[d(x^*, X_{j^*})^2]\leq 2(c_dbn)^{-2/d}$, where $n$ is the sample size, $d$ is the dimension of the covariate space, and $c_d$ is the volume of the unit ball in $d$-dimensions.
\end{lemma}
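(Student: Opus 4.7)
The plan is to convert the expected squared distance into a tail-integral and then use the density lower bound to control the tail probability. Specifically, with $Z = d(x^*, X_{j^*})$, I would use the identity
\[
\E[Z^2] = \int_0^\infty 2r\,\P(Z > r)\,dr.
\]
The event $\{Z > r\}$ means no sample point lies in $B(x^*,r) \cap \{\pi_b > 0\}$, so
\[
\P(Z > r) \le (1 - p(r))^n \le \exp(-np(r)),
\]
where $p(r) = \P(X_1 \in B(x^*, r),\ \pi_b(X_1) > 0)$.

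Next, I would use the density lower bound to control $p(r)$. Since $x^*$ lies in the overlap region (it attains the supremum over $\{x : \pi_b(x) > 0\}$), for sufficiently small $r$ the ball $B(x^*,r)$ is contained in the overlap region, and the density lower bound $b$ gives $p(r) \ge b\,c_d r^d$. Plugging in and substituting $u = n b c_d r^d$, the integral becomes
\[
\int_0^\infty 2r \exp(-n b c_d r^d)\,dr = \frac{2}{d}(n b c_d)^{-2/d}\,\Gamma(2/d) = \Gamma(1 + 2/d)\,(n b c_d)^{-2/d},
\]
using $\Gamma(1 + 2/d) = (2/d)\Gamma(2/d)$. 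Since $\Gamma$ is bounded by $2$ on $[1, 3]$, which contains $1 + 2/d$ for every $d \ge 1$, we obtain $\E[Z^2] \le 2(c_d b n)^{-2/d}$ as desired.

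The main obstacle is the geometric caveat: if $x^*$ happens to lie near the boundary of the overlap region, the inequality $p(r) \ge b c_d r^d$ only holds for small $r$, and for larger $r$ one loses a volume factor. I would handle this by splitting the range of integration at a threshold $r_0$ below which $B(x^*, r_0) \subseteq \{\pi_b > 0\}$, using the exact bound on $[0, r_0]$, and controlling the residual $\int_{r_0}^\infty 2r e^{-n p(r_0)}\,dr$ by its exponential decay so that it is absorbed into the leading term. Alternatively, if $b$ is assumed to lower bound the density on the full support $\mathcal{X}$ rather than just the overlap region, a cleaner route is to just note that $p(r) \ge b \cdot \mathrm{Vol}(B(x^*,r) \cap \mathrm{supp}(\pi_b))$ and to argue that this volume is always at least $c_d r^d$ up to a constant, with the constant absorbed so that the stated bound holds in the form $2(c_d b n)^{-2/d}$.
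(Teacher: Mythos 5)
Your proof is correct and follows essentially the same route as the paper's: the tail-integral identity for $\E[Z^2]$, the observation that $\{d(x^*,X_{j^*})>r\}$ is an i.i.d.\ product event bounded by $(1-p(r))^n\le e^{-np(r)}$, and the density lower bound $p(r)\ge c_d b r^d$. The only real difference is that you evaluate the final integral in closed form as $\Gamma(1+2/d)\,(c_d b n)^{-2/d}\le 2(c_d b n)^{-2/d}$, whereas the paper handles $d=1$ exactly and $d>1$ by splitting the integral at $u=1$; the boundary caveat you raise (that $B(x^*,r)$ may not sit inside the overlap region, e.g.\ when $x^*$ lies on its boundary, so the volume bound degrades by a constant) is a genuine subtlety that the paper's own proof silently elides as well.
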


\begin{proof}
	Let $c_d$ be the volume of the unit ball in $d$ dimensions. We compute 
	\begin{align}
		\E[d(x^*, X_{j^*})^2]&= \int_0^{\infty}\P(d(x^*, X_{j^*})\geq \sqrt{t})\, dt,\\
		&= \int_{0}^{\infty} \P(d(x^*, X_{j})\geq \sqrt{t})^n\, dt,\\
		&\leq \int_0^{\infty} (1-c_dt^{d/2}b)_+^n,\\
		&= (c_db)^{-2/d}\int_0^{\infty} (1-u^{d/2})_+^n\, du \tag{$u = t(c_db)^{2/d}$},\\
		&= (c_db)^{-2/d}\int_0^{1} (1-u^{d/2})^n\, du.\label{eq:integral}
	\end{align}
	If $d=1$, we can manually compute that \eqref{eq:integral} is 
	\begin{equation}
		\frac{2(c_db)^{-2/d}}{n^2+3n+2}\leq \frac{2(c_db)^{-2/d}}{n^2}=2(c_dbn)^{-2/d},
	\end{equation}
	and the result of the lemma is satsified. We now assume that $d>1$, and continue bounding by 
	\begin{align}
		&\leq (c_db)^{-2/d}\int_0^1 \exp(-u^{d/2}n)\, du,\\
		&\leq (c_dbn)^{-2/d}\int_0^{n^{2/d}} \exp(-v^{d/2})\, dv, \tag{$v=un^{2/d}$},\\
		&= (c_dbn)^{-2/d}\int_0^{n^{2/d}} \exp(-v^{d/2}),\\
		&=\ (c_dbn)^{-2/d}\left(\int_0^1 \exp(-v^{d/2})\, dv+\int_1^{n^{2/d}}\exp(-v^{d/2})\, dv\right),\\
		&\leq  (c_dbn)^{-2/d}\left(1+\int_1^{n^{2/d}}\exp(-v)\, dv\right),\\
		&\leq  (c_dbn)^{-2/d}(1+e^{-1}),\\
		&\leq 2(c_dbn)^{-2/d}.
	\end{align}
	So, in all cases, $\E[d(x^*, X_{j^*})^2]\leq 2(c_dbn)^{-d/2}$. 
\end{proof}

\section{Computational improvements on $\hat{\psi}_2^-$} 
\label{sec:nn}

In this section, we discuss approximations of
\begin{equation}
	\hat{\psi}_2^-=\frac{1}{n}\sum_{i=1}^n \pi_e(X_i)\left( \max_{j: \pi_b(X_j)>0} \hat{\mu}(X_j)-Ld(X_i, X_j) \right)\mathbf{1}\{\pi_b(X_i)=0\}
	\label{eq:app_psi2_hat}
\end{equation}
with favorable computational properties. The idea is to bound the maximum in \eqref{eq:app_psi2_hat} by the value of $\hat{\mu}(X_j)-Ld(X_i,X_j)$ for some particular $j$. A natural choice of $j$ is the index of the nearest neighbor of $X_i$ in the overlap region,
\begin{equation}
	\nn(i)= \argmin_{j: \pi_b(X_j)>0} d(X_i, X_j).
\end{equation}
Then, we define
\begin{equation}
	\hat{\psi}_2^{-,\text{cons}} = \frac{1}{n}\sum_{i=1}^n \pi_e(X_i)\left( \hat{\mu}(X_{\nn(i)})-d(X_i, X_{\nn(i)}) \right)\mathbf{1}\{\pi_b(X_i)=0\},
\end{equation}
as a conservative approximation of $\hat{\psi}_2^-$. By conservative, we mean that $\hat{\psi}_2^-\geq \hat{\psi}_2^{-,\text{cons}}$, so that the bounds obtained by using $\hat{\psi}_2^{-, \text{cons}}$ are always wider than those obtained by using $\hat{\psi}_2^-$. This ensures that the interval $[\hat{\psi}^-, \hat{\psi}^+]$ constructed using $\hat{\psi}_2^{-, \text{cons}}$ will still be consistent for $\psi(P_0)$ in the sense of Lemma~\ref{thm:interval_consistency}.

However, computing $\hat{\psi}_2^{-,\text{cons}}$ will typically be much faster than computing $\hat{\psi}_2^-$. This is because computing $\hat{\psi}_2^{-,\text{cons}}$ only requires finding the nearest overlap neighbor of each point in the no-overlap region, and then evaluating $\hat{\mu}(X_{\nn(i)})-d(X_i, X_{\nn(i)})$ for that nearest neighbor, rather than evaluating over all points in the overlap region and taking the maximum. This means that, once we have found the nearest overlap neighbor of each point in the no-overlap region, we need only $O(n)$ further computations to compute $\hat{\psi}_2^{-,\text{cons}}$. 

We now consider the complexity of nearest-neighbor search problem. To find an exact nearest-neighbor for each point in the no-overlap region generally requires computing all of the pairwise distances $d(X_i, X_j)$, and thus will still take time $O(n^2)$. However, we find in practice that since only $O(n)$ further computation is required to compute $\hat{\psi}_2^{-,\text{cons}}$, this approach is still faster than exactly computing $\hat{\psi}_2^-$. Furthermore, there exist data structures for exact nearest-neighbor search which use various heuristics to improve performance, and these can be leveraged for further computational gains \citep{bentley1975, omohundro1989}.

For settings that require a method that is faster than $O(n^2)$, there are two possibilities depending on the dimensionality of the covariates. If the covariates $X_i$ are one-dimensional, methods based on Voronoi diagrams can be used to compute exact nearest neighbors for all points in the overlap region in $O(n)$ time \citep{har2011, preparata2012}. In higher dimensions, we can instead use approximate nearest-neighbor search algorithms. For example, if the covariates $X_i$ lie in $\R^p$, and the metric $d$ is Euclidean distance, there exist algorithms that return a point $j^*$ such that $d(X_i, X_{j^*})\leq c d(X_i, X_{\nn(i)})$ in time $O(n^{1/(2c^2-1-o(1)}+pn^{o(1)}))$ \citep{andoni2018}. For a moderate value of $c$, such as $c=2$, this gives a runtime of $O(n^{1/(7-o(1)}+pn^{o(1)})$ for each point in the no-overlap region. Since there are $O(n)$ points in the no-overlap region, this leads to a total runtime of $O(n^{8/7+o(1)}+pn^{1+o(1)})$, improving significantly on the $O(n^2)$ time required when using exact nearest neighbors. Appealingly, even if we use an approximate nearest neighbor rather than an exact one, we will still obtain a conservative estimate of $\hat{\psi}_2^-$, and thus retain statistical validity.

	Finally, another computational advantage of $\hat{\psi}_2^{-,\text{cons}}$ is that once we have identified the nearest-neighbor of each point in the overlap region, we can compute $\hat{\psi}_2^{-,\text{cons}}$ for any value of $L$ with only $O(n)$ operations. This makes search over a large range of values of $L$, as is done for sensitivity analyses like those shown in Section~\ref{sec:experiments}, quite efficient as well. 


\section{Additional experimental results}
\label{sec:app_sims}

\subsection{Additional results for yeast dataset}
In Figure~\ref{fig:coverage}, we plot the results from Table~\ref{tab:yeast} for $L=3, 4, 5, \infty$, since these were the values for which we determined (based on the feasibility of the optimization problem) that the smoothness assumption was plausible. We see that, for these values of $L$, the coverage as defined in Theorem~\ref{thm:interval_consistency} with $\epsilon=0.01$ approaches 100\%.

In Table~\ref{tab:yeast_app}, we show the results of Table~\ref{tab:yeast} when defining coverage as in Theorem~\ref{thm:interval_consistency} with $\epsilon=0.005$ rather than $\epsilon=0.01$, as in the main text. We see that coverage rates are lower, but still approach 1 as the sample size increases. Crucially, the coverage for large values of $L$ remains comparable to the coverage of the Manski intervals, suggesting that the Lipschitz smoothness assumptions for those values of $L$ are plausible. 

\begin{figure}[h]
	\centering
	\includegraphics[width=0.7\linewidth]{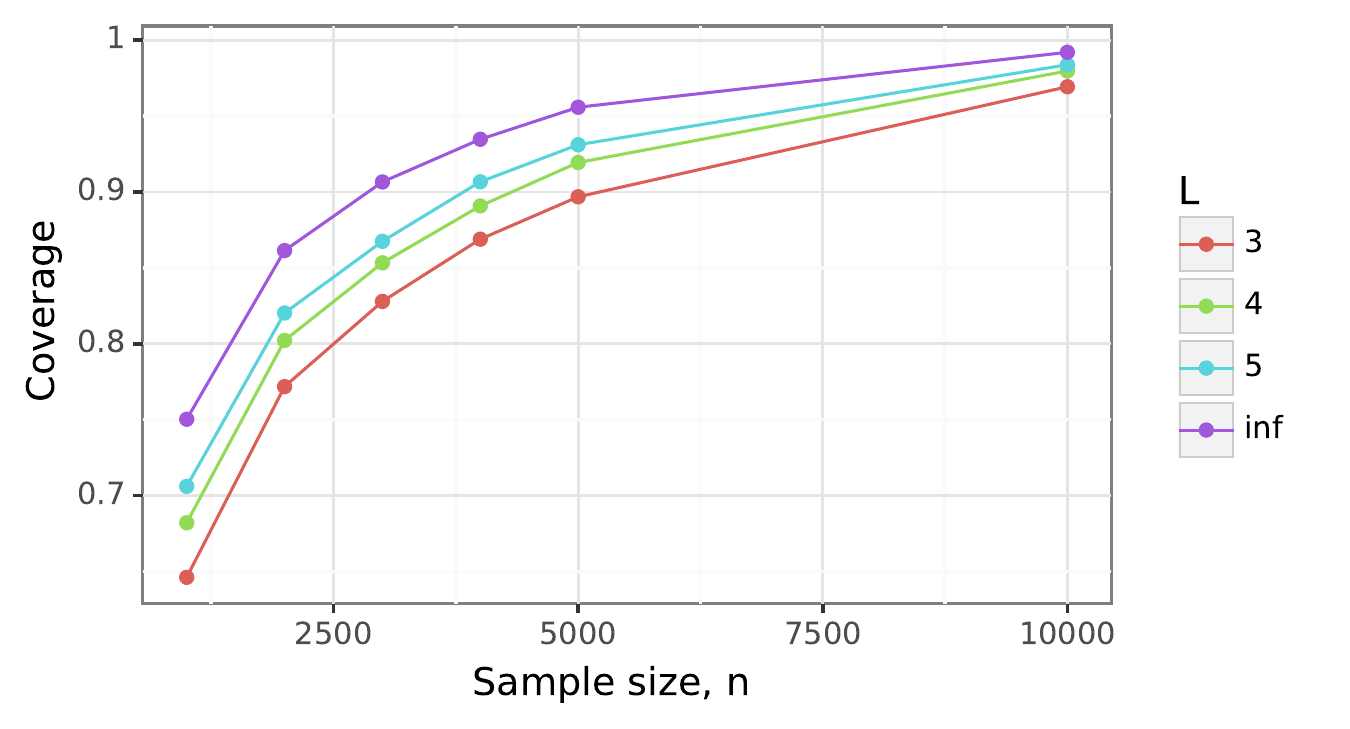}
	\caption{Visualization of results from Table~\ref{tab:yeast} for the values of $L$ for which the optimization problem is consistently feasible and thus the smoothness assumption is plausible. We see that as $n\to \infty$, the coverage (as defined in Theorem~\ref{thm:interval_consistency} with $\epsilon = 0.01$) approaches 100\%.}
	\label{fig:coverage}
\end{figure}

\begin{table}
	\centering
	\caption{The same experiment as in Table~\ref{tab:yeast}, but with coverage defined using $\epsilon=0.005$ in Theorem~\ref{thm:interval_consistency} rather than $\epsilon=0.01$. With the smaller value of $\epsilon$, coverage rates are lower, but still approach 1 as the sample size increases. Furthermore, the coverage of the Manski intervals is comparable to the coverage of the Lipschitz intervals for large values of $L$, indicating that those smoothness assumptions are plausible.\vspace*{0.1in}}
	\resizebox{\linewidth}{!}{	
	\begin{tabular}{lllllll}
		\toprule
		$L$&$n=1000$&$n=2000$&$n=3000$&$n=4000$&$n=5000$&$n=10000$\\
		\midrule 
		1&0.238 (0.785)&0.0002 (0.002)&0 (0.00)&0 (0.00)&0 (0.00)&0 (0.00)\\
		2&0.392 (1.00)&0.493 (1.00)&0.543 (1.00)&0.576 (1.00)&0.603 (0.995)&0.060 (0.106)\\
		3&0.450 (1.00)&0.561 (1.00)&0.613 (1.00)&0.643 (1.00)&0.676 (1.00)&0.766 (1.00)\\
		4&0.498 (1.00)&0.6057 (1.00)&0.657 (1.00)&0.685 (1.00)&0.723 (1.00)&0.810 (1.00)\\
		5&0.528 (1.00)&0.632 (1.00)&0.687 (1.00)&0.716 (1.00)&0.752 (1.00)&0.841 (1.00)\\
		$\infty$&0.591 (1.00)&0.706 (1.00)&0.757 (1.00)&0.789 (1.00)&0.819 (1.00)&0.895 (1.00)\\
		\bottomrule
	\end{tabular}
	}
	\label{tab:yeast_app}
\end{table}

\subsection{Additional results for Yahoo! Front Page Today dataset}

In Figure~\ref{fig:yahoo_ope_appendix}, we repeat the experiment on the Yahoo!\ Front Page Today dataset described in Section~\ref{sec:experiments} for a wider range of smoothness parameters $L$ and cutoffs $T$. With this range of values, we make two new observations: first, the convergence of our interval to the Manski interval in the upper endpoint was not apparent for small values of $L$, but is apparent for the values of $L$ considered here. Second, when $T=0.5$, there are no longer any overlap violations, and our partial identification intervals have length zero and recover the IPW estimator $\hat{\psi}_1$.

\begin{figure}[t]
	\centering
	\includegraphics[width=\textwidth]{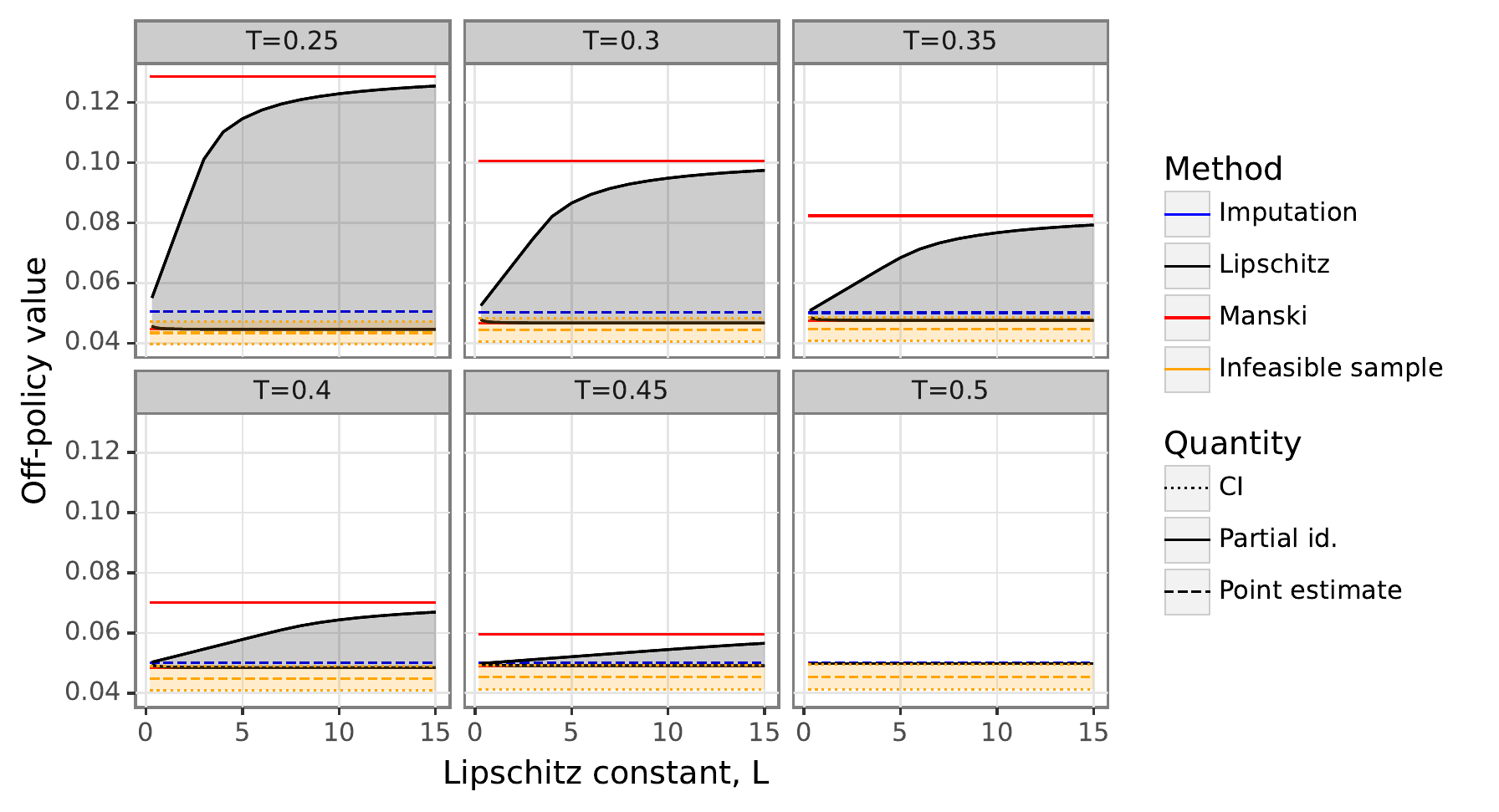}
	\caption{The same experiment as in Figure~\ref{fig:yahoo_ope_manski} over a wider range of values of $T$ and $L$. With this range of values, we see convergence of our bounds to the Manski bounds as $L$ grows large, and also that when $T=0.5$ and there are no longer any overlap violations, our intervals have width zero as expected.}
	\label{fig:yahoo_ope_appendix}
\end{figure}


\section{Counterexample for no-interaction under smoothness and monotonicity}
\label{sec:counter}

In this section, we present an example showing that the no-interaction property fails if we make both smoothness and monotonicity assumptions. 

\begin{figure}[h!]
	\centering
	\includegraphics{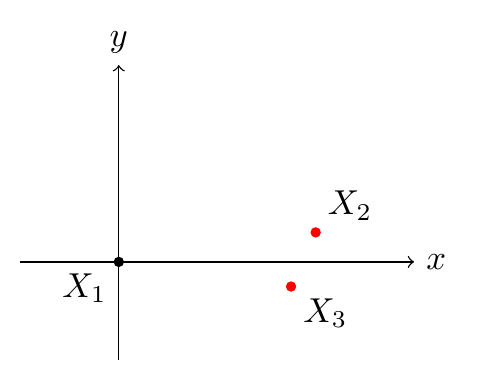}
	\caption{Counterexample showing that the no-interaction condition fails when making both a Lipschitz smoothness assumption and a monotonicity assumption. Here, $X_1$ lies in the overlap region, but $X_2$ and $X_3$ do not, and the only ordering between the points is that $X_1\prec X_2$. Thus, the tight bound that the $i=1$ unit implies on the $i=2$ unit is $t_2\geq \hat{\mu}(X_1)$, and propagating this to the $i=3$ unit gives that $t_3\geq \hat{\mu}(X_1)-Ld(X_2, X_3)$. This will be tighter than the bound directly coming from the $i=1$ unit, $t_3\geq \hat{\mu}(X_1)-Ld(X_1,X_3)$, whenever we have $d(X_2, X_3) < d(X_1, X_3)$, as in the figure, and so there is interaction between points in the no-overlap region.}
	\label{fig:counter}
\end{figure}

To construct our example, we consider a problem with $n=3$ points with two-dimensional covariates $X_i$ such that $X_1=(0,0)$ is the origin, $X_2$ lies in the first quadrant, and $X_3$ lies in the fourth quadrant. We assume that $X_1$ is in the overlap region, and that $X_2$ and $X_3$ are not. This configuration is illustrated Figure~\ref{fig:counter}.

We will assume both that the conditional mean $\mu$ is $L$-Lipschitz and that it is monotone with respect to the dictionary order $\prec$. That is, we have $(x_1, y_1)\prec (x_2, y_2)$ if $x_1\leq x_2$ and $y_1\leq y_2$. 

With this set-up, we have that $t_1=\hat{\mu}(X_1)$ since $X_1$ lies in the overlap region. The constraints induced on the $t_2$ point by $t_1$ are 
\begin{equation}
	t_2\geq \hat{\mu}(X_1)-Ld(X_1, X_2)\quad\text{and}\quad t_2\geq \hat{\mu}(X_1),
\end{equation}
since $X_1\prec X_2$. The latter of these is always tighter, and so to minimize the objective we set $t_2=\hat{\mu}(X_1)$. With this choice of $t_2$, the bounds induced on $t_3$ by $t_1$ and $t_2$ are
\begin{equation}
	t_3\geq \hat{\mu}(X_1)-Ld(X_1, X_3)\quad\text{and}\quad t_3\geq \hat{\mu}(X_1)-Ld(X_2, X_3),
\end{equation}
respectively. If we select $X_2$ and $X_3$ so that $d(X_2, X_3)< d(X_1, X_3)$, then the bound induced on $t_3$ by $t_2$ will be tighter than the bound induced on $t_3$ by $t_1$, and so we see that the constraints in the no-overlap region do interact with each other in this case. 

This example thus highlights that our results in Section~\ref{sec:estimator} are non-trivial and highlight special properties of the smoothness assumption that have not been previously observed; further characterization of what kinds of assumptions and combinations satisfy the no-interaction property is an interesting direction for future work.

\section{Connections to results of \citet{ben2021}}
\label{sec:ben_compare}

In this section, we describe how our results build on and extend the work of \citet{ben2021}. In our notation, the procedure of \citet{ben2021} for policy learning with no overlap is given by the max-min optimization problem
\begin{equation}
	\argmax_{\pi \in \Pi} \min_{P \in \mathcal{P}_L^{\Lip}} \frac{1}{n}\sum_{i=1}^n \mu_P(X_i)\pi(X_i)\mathbf{1}\{\pi_b(X_i)=0\},
\end{equation}
for some policy class $\Pi$. To solve this problem, \citet{ben2021} observe that 
\begin{equation}
	\min_{P\in \mathcal{P}_L^{\Lip}} \frac{1}{n}\sum_{i=1}^n \mu_P(X_i)\pi(X_i)\mathbf{1}\{\pi_b(X_i)=0\} \geq \frac{1}{n}\sum_{i=1}^n \pi(X_i) \left( \max_{j:\pi_b(X_j)>0} \tilde{\mu}(X_j)-Ld(X_i, X_j \right)\mathbf{1}\{\pi_b(X_i)=0\},
		\label{eq:bm_result}
\end{equation}
where $\tilde{\mu}$ is a simultaneous lower confidence bound on the conditional mean function $\mu_{P_0}$, and then solve the more conservative problem 
\begin{equation}
	\argmax_{\pi \in \Pi} \frac{1}{n}\sum_{i=1}^n \pi(X_i) \left( \max_{j:\pi_b(X_j)>0} \tilde{\mu}(X_j)-Ld(X_i, X_j \right)\mathbf{1}\{\pi_b(X_i)=0\}
		\label{eq:bm_problem}
\end{equation}
to learn a policy.

For the purposes of policy learning, such a conservative lower bound is sufficient\textemdash indeed, if we were to subtract a large constant, say 100, from the right-hand side of \eqref{eq:bm_result}, the solution to \eqref{eq:bm_problem} would remain unchanged. However, for policy evaluation, such conservative bounds are not acceptable: we would like to use the exact value of the left-hand side of \eqref{eq:bm_result} as a lower bound, and not incur needlessly wide bounds. 

The calculation of this exact value is the contribution of our Theorem~\ref{thm:mu_hat_lip}(a), which shows that 
\begin{equation}
	\min_{P\in \mathcal{P}_L^{\Lip}} \frac{1}{n}\sum_{i=1}^n \mu_P(X_i)\pi(X_i)\mathbf{1}\{\pi_b(X_i)=0\} = \frac{1}{n}\sum_{i=1}^n \pi(X_i) \left( \max_{j:\pi_b(X_j)>0} \hat{\mu}(X_j)-Ld(X_i, X_j \right)\mathbf{1}\{\pi_b(X_i)=0\},
\end{equation}
where $\hat{\mu}$ is an estimate of $\mu_{P_0}$. Thus, the right-hand side of the previous display, which is exactly our $\hat{\psi}_2^-$, is the correct estimator of a bound on the off-policy value. This result, along with its proof based on the no-interaction property, is novel to our work, and guides practitioners on dealing with overlap violations in a policy evaluation problem while ensuring tight bounds.

\end{document}